\documentclass{article}

\usepackage{microtype}
\usepackage{graphicx}
\usepackage{subcaption}
\usepackage{booktabs} 
\usepackage{float}
\usepackage{svg}
\usepackage{tabularray}
\usepackage{tabularx}
\usepackage{longtable}
\usepackage{hyperref}
\usepackage{tipa}
\usepackage[normalem]{ulem} 

\usepackage{dirtytalk}

\usepackage[preprint]{icml2026}

\usepackage{amsmath}
\usepackage{amssymb}
\usepackage{mathtools}
\usepackage{amsthm}
\usepackage{booktabs}
\usepackage{tabularx}
\usepackage{multirow}

\usepackage[capitalize,noabbrev]{cleveref}

\theoremstyle{plain}
\newtheorem{theorem}{Theorem}[section]

\theoremstyle{definition}

\newtheorem{assumption}[theorem]{Assumption}
\theoremstyle{remark}

\usepackage[textsize=tiny]{todonotes}

\icmltitlerunning{HuPER: A Human-Inspired Framework for Phonetic Perception}

\begin{document}

\twocolumn[
  \icmltitle{HuPER: A Human-Inspired Framework for Phonetic Perception}



\icmlsetsymbol{equal}{*}

\begin{icmlauthorlist}
  \icmlauthor{Chenxu Guo}{equal,zju,berkeley}
  \icmlauthor{Jiachen Lian}{equal,berkeley}
  \icmlauthor{Yisi Liu}{berkeley}
  \icmlauthor{Baihe Huang}{berkeley}
  \icmlauthor{Shriyaa Narayanan}{berkeley}
  \icmlauthor{Cheol Jun Cho}{berkeley}
  \icmlauthor{Gopala Anumanchipalli}{berkeley}
\end{icmlauthorlist}

\icmlaffiliation{zju}
{Zhejiang University, China}

\icmlaffiliation{berkeley}
{University of California, Berkeley, USA}

\icmlcorrespondingauthor{Jiachen Lian}{jiachenlian@berkeley.edu}

  \icmlkeywords{Machine Learning, ICML}

  \vskip 0.3in
]



\printAffiliationsAndNotice{
\textsuperscript{*}Equal contribution.
}
\begin{abstract}
We propose \textit{HuPER}, a human-inspired framework that models phonetic perception as adaptive inference over acoustic-phonetics evidence and linguistic knowledge. With only 100 hours of training data, HuPER achieves state-of-the-art phonetic error rates on five English benchmarks and strong zero-shot transfer to 95 unseen languages. HuPER is also the first framework to enable adaptive, multi-path phonetic perception under diverse acoustic conditions. All training data, models, and code are open-sourced. Code and demo avaliable at \url{https://github.com/HuPER29/HuPER}.
\end{abstract}

\section{Introduction}

Phonetic modeling is fundamental to speech perception. Early speech perception systems were expert systems~\cite{davis1952automatic,erman1980hearsay,lowerre1976harpy,baker1975dragon,rabiner2002tutorial} designed to emulate human perception, but constrained in scalability. Recent advances have demonstrated that large-scale word-based ASR~\cite{radford2023robust, zhang2023google-usm, pratap2024scaling-mms, omnilingual2025omnilingual} can match or surpass human parity across domains. However, progress at the phonetic level remains limited, with little comparable gains despite similar scaling efforts.~\cite{li2025powsm}. These observations motivate a human-centered rethinking of phonetic foundation models.

\begin{figure}
    \centering
    \includegraphics[width=1.0\linewidth]{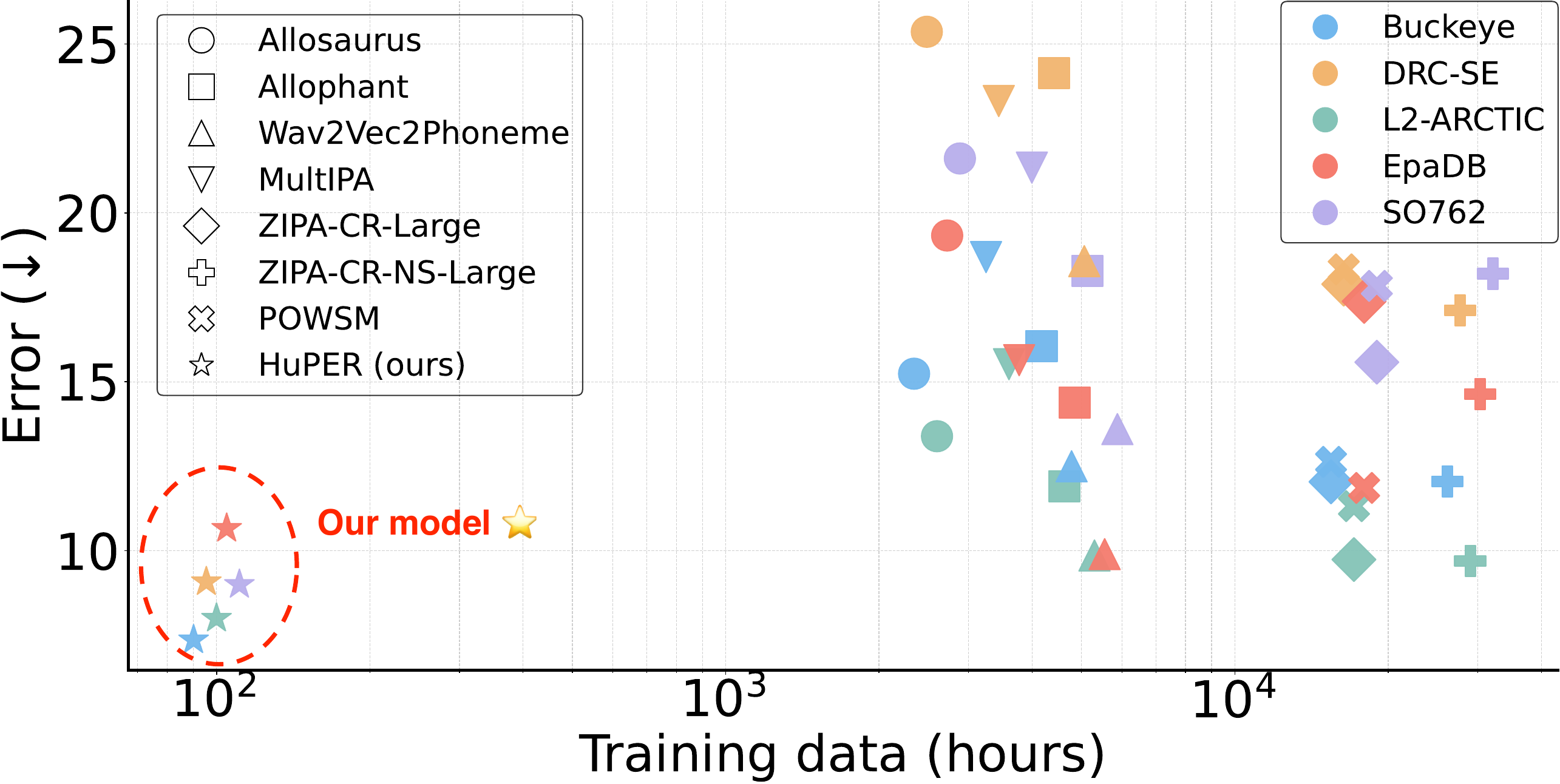}
    \caption{HuPER achieves highly data-efficient phonetic transcription on English variation benchmarks.
}
    \label{fig:placeholder}
\end{figure}

\begin{figure*}
    \centering
    \includegraphics[width=1.0\linewidth]{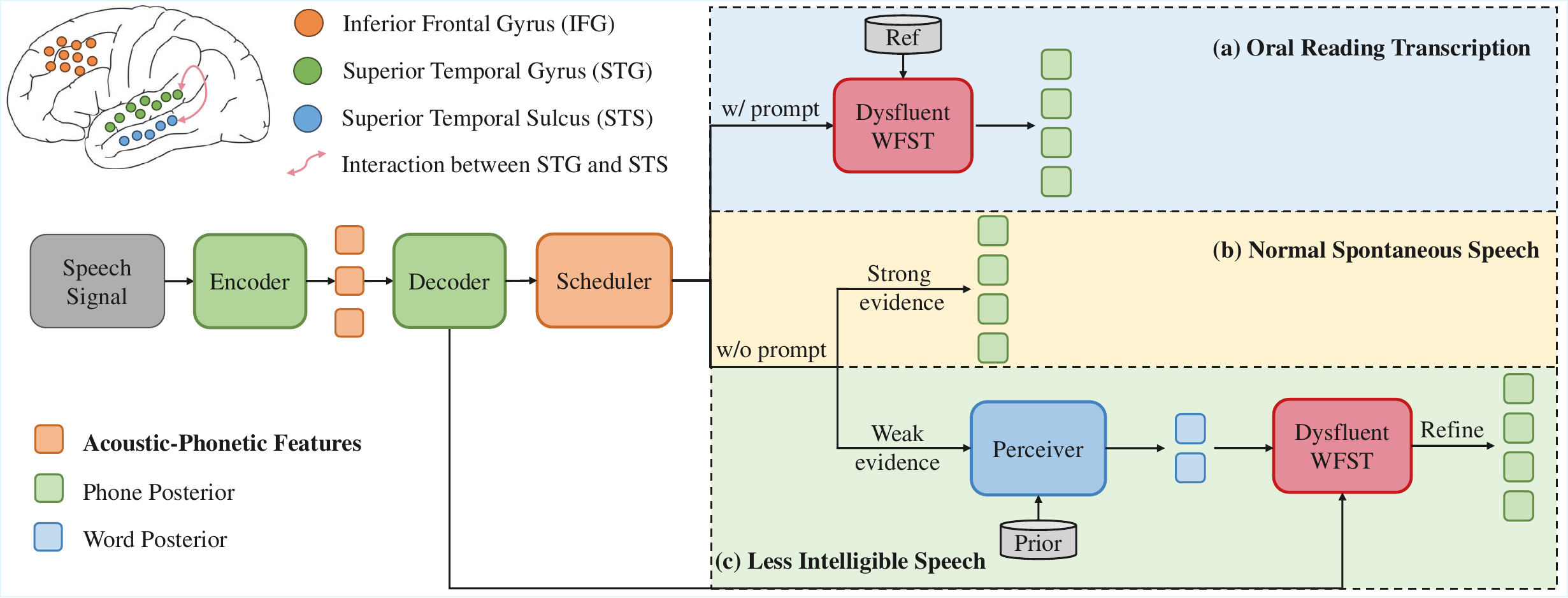}
    \caption{\textbf{HuPER overview: evidence-controlled multi-path speech perception.}
    Left: HuPER-Recognizer (Encoder/Decoder; mapped to STG) converts the speech signal into acoustic--phonetic features (orange) and phone posteriors (green). HuPER-Scheduler (mapped to IFG) monitors evidence strength and selects an inference route. Right: (a) \textit{Oral reading transcription}: with an external prompt/reference, a Dysfluent WFST applies explicit top-down constraints to produce the transcript. (b) \textit{Normal spontaneous speech}: when evidence is strong, the system trusts bottom-up phone evidence and outputs directly. (c) \textit{Less intelligible speech}: when evidence is weak, HuPER-Perceiver combines phone evidence with a lexical prior to form word hypotheses (blue), which are then refined by the Dysfluent WFST~\cite{conf/interspeech/GuoLZZLYPDEVMBW25}.}
    \label{fig:huper_human}
\end{figure*}

The human phonetic processor extracts acoustic–phonetic cues~\cite{stevens2000acoustic,mesgarani2014phonetic-science} and organizes them into hierarchical phonological representations for higher-level linguistic inference~\cite{johnson2011acoustic}. This process comprises a language-universal and an  experience-dependent layer~\cite{werker1988cross} encoding language-specific structure~\cite{bhaya2026shared-nature}. Both functions are dynamically shaped by bottom-up sensory evidence and top-down perceptual processes~\cite{mesgarani2014phonetic-science, bhaya2026shared-nature}.

Early self-supervised speech learning (S3L) models~\cite{mohamed2022self-s3l} can be viewed as analogous to infants’ early exposure to continuous speech~\cite{lavechin23_interspeech-baby-slm}, during which broad acoustic–phonetic representations~\cite{choi24b_interspeech} are initially formed. However, subsequent fine-tuning~\cite{conf/interspeech/XuBA22, Chen2021WavLM} typically introduces lexical supervision \textit{prematurely}. For instance, in fast speech, “last Sunday” is often realized as “las[] Sunday.” A developmentally plausible approach would supervise such surface forms first, allowing acoustic–phonetic representations to stabilize before higher-level abstraction. In contrast, most current phonetic models rely on G2P-derived canonical targets~\cite{rudnicky1993cmu,mortensen2018epitran,black2001flite} (e.g., “last Sunday”), which encode phonological and grammatical regularities absent from the acoustic signal. This supervision mismatch prevents models from fully exploiting their acoustic–phonetic capacity and undermines subsequent phonological and lexical inference. 


Beyond the premature introduction of lexical supervision, another open challenge is that human phonetic perception operates as a dynamic closed-loop system rather than a one-pass mapping from speech to phonemes~\cite{mcclelland1986trace,hickok2007cortical,norris2003perceptual}. Most phoneme recognition models implicitly assume a unidirectional, feedforward processing pipeline. However, under ambiguous or degraded conditions, human listeners often engage top-down inference, using lexical or contextual expectations to constrain phonetic interpretation~\cite{warren1970perceptual,ganong1980phonetic,samuel2001knowing}, which in turn refines higher-level representations~\cite{mcclelland1986trace,norris2003perceptual}. Notably, even when top-down information is available, listeners can deliberately rely on bottom-up processing when instructed or required by task demands~\cite{cutler2012native}, indicating flexible cognitive control~\cite{posner1989attention,norman1986attention}. Moreover, when reference text is available, as in performed or rehearsed speech, perception follows another pathway guided by explicit expectations~\cite{lian2024ssdm}. In contrast, most current AI systems implement a single feedforward route and lack mechanisms for modeling such multi-path, closed-loop dynamics.

Given these limitations, we propose \textbf{HuPER (A Human-Inspired Framework for Phonetic Perception)}, a human-centered framework that models phonetic perception as adaptive inference integrating acoustic evidence and linguistic knowledge. HuPER departs from conventional end-to-end pipelines by explicitly coordinating bottom-up cues and top-down expectations for robust, context-aware perception. Our main contributions are summarized as follows:

(1) We present HuPER, the first unified and explicit computational framework for modeling human phonetic perception, which also provides a diagnostic perspective on existing phoneme recognition models.

(2) HuPER achieves state-of-the-art phonetic accuracy with only 100 hours of training data, reaching an average PFER=8.82 on English benchmarks and demonstrating decent zero-shot multilingual transfer.

(3) We propose an adaptive multi-path inference mechanism for phonetic perception, enabling dynamic pathway selection and improved robustness under degraded and disordered speech conditions.


\begin{figure*}[t]
    \centering
    \includegraphics[width=1.0\linewidth]{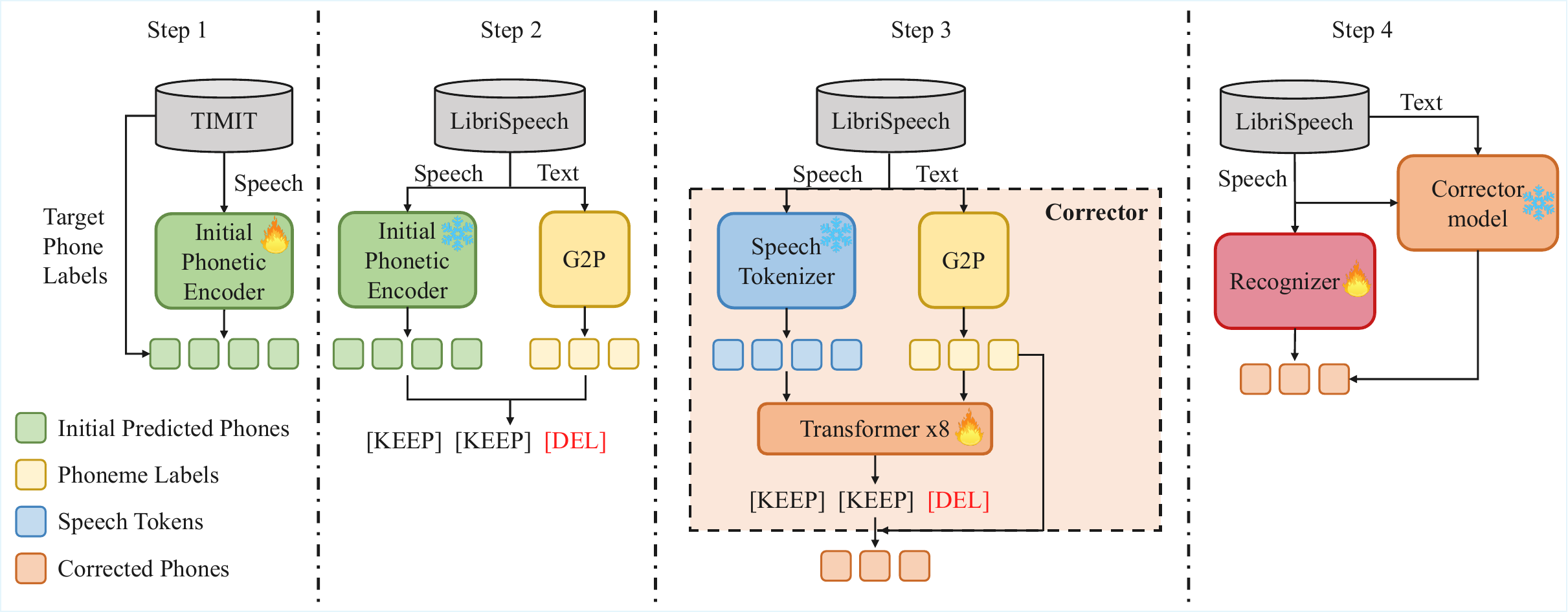}
    \caption{\textbf{HuPER-Recognizer self-learning pipeline}. The training procedure consists of four stages. (1) An initial phone recognizer is trained on a small human-annotated corpus (TIMIT) to produce acoustic phone predictions. (2) The recognizer is applied to a large transcript-only corpus (LibriSpeech) to generate teacher pseudo phones from speech, while a G2P system produces canonical phoneme sequences from text. (3) A Corrector model learns edit operations (keep, delete, substitute, insert) that transform canonical G2P phones into acoustically grounded phone proxies, using both speech tokens and G2P phones as input. (4) The recognizer is retrained on the large corpus using corrected pseudo phone labels, yielding a more robust and language-generalizable phone recognizer.}
    \label{fig:selflearning}
\end{figure*}

\section{HuPER overview}

We propose \textbf{HuPER (Human-Perceptual Phonetic Encoder)}, a modular speech perception framework that supports adaptive multi-path inference by integrating bottom-up acoustic--phonetic modeling with explicit top-down constraints, serving as an approximation to human cognitive speech perception systems~\cite{mcclelland1986trace,norris2003perceptual}. As shown in Figure~\ref{fig:huper_human}, HuPER consists of three functional modules and a central scheduler:

\textbf{(1) HuPER-Recognizer} (STG-like, Bottom-Up Acoustic--Phonetic Perception, Sec.~\ref{sec:recognizer}).  
It extracts language-general acoustic--phonetic evidence and outputs spoken phones, providing the foundational representation for downstream modules and determining the upper bound of universal phonetic perception. We further develop a self-training procedure for HuPER-Recognizer, which is theoretically grounded in our DRRC framework.
\textbf{(2) HuPER-Perceiver} (STS-like, Phonetic--Lexical Integration, Sec.~\ref{sec:perceiver}). It combines acoustic--phonetic representations with explicit lexical and phonotactic priors to generate phonetic-enhanced word hypotheses.
\textbf{(3) Dysfluent WFST} (explicit top-down constraints;~\cite{conf/interspeech/GuoLZZLYPDEVMBW25}).
It provides a human-inspired, top-down constraint mechanism by representing dysfluencies and pronunciation variants in a WFST constraint graph, which can be conditioned on external references or on hypotheses.
\textbf{(4) HuPER-Scheduler} (IFG-like, Sec.~\ref{sec:scheduler}). The scheduler selects inference pathways based on signal quality and task context. For clear speech, HuPER relies on bottom-up inference. Under degraded or ambiguous conditions, it integrates HuPER-Perceiver outputs with Dysfluent WFST constraints. In reference-guided scenarios~\cite{lian2024ssdm}, known intended text is incorporated through the constraint graph. We detail each part in the following sections.

\section{HuPER-Recognizer}
\label{sec:recognizer}

\paragraph{Task definition and outputs.}
The HuPER-Recognizer is a WavLM-Large model~\cite{Chen2021WavLM} fine-tuned for phone recognition.
Given a speech signal $X$, it outputs phone posteriors and a decoded phone sequence $\hat Y_\theta$.
We treat $\hat Y_\theta$ as the \emph{spoken phone} (i.e., what a human would perceive from the acoustics), and we aim to learn phone evidence that generalizes across languages.

To scale beyond scarce human phone labels, a common approach is self-training: generate pseudo phone labels on a large transcript-only set and retrain the recognizer on them~\citep{lee2013pseudo,xie2020noisy}.
However, naive pseudo-label training can amplify systematic errors (confirmation bias) and can be statistically biased when the availability of true phone labels is not random.

We address these issues by introducing \textbf{doubly robust risk correction (DRRC)}.
Concretely, we (i) construct proxy phone supervision on $\mathcal{D}'$ via a phoneme$\rightarrow$phone Corrector (Algorithm~\ref{alg:huper_self_learning}), and (ii) view the resulting training problem as missing-label learning and analyze it through a DRRC objective (Sec.~\ref{sec:recognizer_drrc}).

\paragraph{HuPER self-learning strategy.}
Figure~\ref{fig:selflearning} summarizes the recipe for constructing proxy phone supervision on a large transcript-only dataset.
We use a small labeled set $\mathcal{D}=\{(X_i,Y_i)\}$ with human-verified phones and a transcript-only set $\mathcal{D}'=\{(X_j,T_j)\}$.
For each $(X,T)\in\mathcal{D}'$, we compute a canonical phoneme sequence $Z=\mathrm{G2P}(T)$, obtain a teacher phone hypothesis $\bar Y$ from the current recognizer, and apply a Corrector to produce a corrected proxy phone sequence $\tilde Y$.
Algorithm~\ref{alg:huper_self_learning} summarizes the iterative training procedure.

\begin{algorithm}[tb]
  \caption{HuPER self-learning strategy}
  \label{alg:huper_self_learning}
  \begin{algorithmic}
    \STATE {\bfseries Input:} human-labeled phone dataset $\mathcal{D}=\{(X_i,Y_i)\}$; transcript-only dataset $\mathcal{D}'=\{(X_j,T_j)\}$; G2P function $\mathrm{G2P}(\cdot)$; Corrector $C(\cdot)$; number of rounds $R$
    \STATE {\bfseries Output:} HuPER-Recognizer $f^{(R)}$
    \STATE Train initial recognizer $f^{(0)}$ on $\mathcal{D}$ with CTC loss.
    \FOR{$r = 0$ {\bfseries to} $R-1$}
      \STATE Initialize pseudo-labeled set $\tilde{\mathcal{D}}'^{(r)} = \emptyset$.
      \FOR{each $(X,T)$ {\bfseries in} $\mathcal{D}'$}
        \STATE $Z \leftarrow \mathrm{G2P}(T)$ \quad (canonical phoneme sequence)
        \STATE $\bar{Y}^{(r)} \leftarrow f^{(r)}(X)$ \quad (teacher phone prediction)
        \STATE $\tilde{Y}^{(r)} \leftarrow C(X, Z;\bar{Y}^{(r)})$ \quad (corrected pseudo phones; $\bar{Y}^{(r)}$ optional)
        \STATE $\tilde{\mathcal{D}}'^{(r)} \leftarrow \tilde{\mathcal{D}}'^{(r)} \cup \{(X,\tilde{Y}^{(r)})\}$
      \ENDFOR
      \STATE Train $f^{(r+1)}$ on $\tilde{\mathcal{D}}'^{(r)}$ (optionally together with $\mathcal{D}$).
    \ENDFOR
    \STATE \textbf{return} $f^{(R)}$
  \end{algorithmic}
\end{algorithm}

\paragraph{DRRC perspective.}
\label{sec:recognizer_drrc}
To formalize our analysis, we define the true population risk for a fixed HuPER-Recognizer parameter $\theta$ as
\begin{equation}
    R(\theta) := \mathbb{E}\big[\ell_\theta(e(Y),X)\big],
\end{equation}
where $Y$ represents the latent true phone label (observed only on $\mathcal{D}$), and $\hat{Y}$ denotes an always-observed proxy (derived from the teacher or the Corrector).
Treating the canonical phoneme sequence $Z=\mathrm{G2P}(T)$ as an auxiliary covariate, we cast self-training as a \emph{missing data problem}~\citep{robins1994estimation,bang2005doubly} in which the observability of the true label $Y$ is governed by $Z$ via the true propensity function:
\begin{equation}
g^*(z,\hat y) := \mathbb{P}(A=1\mid Z=z,\hat Y=\hat y).
\end{equation}
Practically, this formulation assumes that the transcript contains sufficient auxiliary information, distinct from the features captured by the teacher model $f_T$, to distinguish the pseudo-label $\hat Y$ from the true label $Y$ (but not necessarily to fully recover the true label $Y$).
We demonstrate that there exists a corrector such that the HuPER loss yields a \emph{doubly robust} estimate of $R(\theta)$ on the self-training dataset. Specifically, the estimator is consistent if either (i) the propensity model for missing phone labels is correctly specified, or (ii) the proxy label $\hat{Y}$ is (asymptotically) accurate.

\begin{theorem}[Informal version of \Cref{thm:dr}]
For any measurable $g:\mathcal{Z}\times\{1,\dots,K\}\to(0,1]$, define
\begin{equation}
C_g(W) := e(\hat Y) + \frac{A}{g(Z,\hat Y)}\big(e(Y)-e(\hat Y)\big).
\end{equation}
Let $\hat g$ be a cross-fitted estimator clipped to $[\varepsilon,1]$ and define
\begin{equation}
\hat R_n(\theta) := \frac{1}{n}\sum_{i=1}^n \ell_\theta\!\big(C_{\hat g}(W_i),X_i\big).
\end{equation}
Then $\mathbb{E}[\ell_\theta(C_g(W),X)] = R(\theta)$ provided either $\textbf{(G): }  g = g^*$ or $
\textbf{(Y): } \hat Y = Y$. Furthermore, $\hat R_n(\theta)\to R(\theta)$ in probability provided either:
\begin{equation}
\textbf{(G): } \mathbb{E}[\,|\hat g(Z,\hat Y)-g^*(Z,\hat Y)|\,]\to 0,
\\ \text{or~}
\textbf{(Y): } \mathbb{P}(\hat Y\ne Y)\to 0.
\end{equation}
\end{theorem}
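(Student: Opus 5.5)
The plan is to reduce both claims to a conditional-mean identity for the corrected label $C_g(W)$, and then transfer it to $\ell_\theta$. Two structural assumptions do the work; I take them as the content of the formal \Cref{thm:dr}. First, $\ell_\theta(\cdot,x)$ is affine in its label argument, as for CTC/cross-entropy with a soft target $c$, where $\ell_\theta(c,x)=-\langle c,\log p_\theta(x)\rangle$. Second, the missingness is at random given the auxiliary covariates, i.e.\ $A \perp (Y,X)\mid (Z,\hat Y)$. Both are needed: without linearity in $c$, $\mathbb{E}[\ell_\theta(C_g(W),X)]$ cannot equal $R(\theta)$ in general. Throughout, $\|e(\cdot)\|\le B$, $\|\log p_\theta(x)\|\le L$ (posteriors floored), and overlap $g^*\ge\varepsilon$ are imposed.

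\textbf{Step 1 (population double robustness).} By linearity, $\mathbb{E}[\ell_\theta(C_g(W),X)]=\mathbb{E}\big[\ell_\theta(\mathbb{E}[C_g(W)\mid X,Z,\hat Y],X)\big]$. This uses the tower property, with $\log p_\theta(X)$ measurable with respect to the conditioning. Applying MAR,
\begin{equation*}
\mathbb{E}[C_g(W)\mid X,Z,\hat Y]=e(\hat Y)+\frac{g^*(Z,\hat Y)}{g(Z,\hat Y)}\big(\mathbb{E}[e(Y)\mid X,Z,\hat Y]-e(\hat Y)\big).
\end{equation*}
Under \textbf{(G)} the ratio is $1$, so the expression collapses to $\mathbb{E}[e(Y)\mid X,Z,\hat Y]$. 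Under \textbf{(Y)} the bracket vanishes and $e(\hat Y)=e(Y)$. In both cases a second tower step gives $\mathbb{E}[\ell_\theta(e(Y),X)]=R(\theta)$.

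\textbf{Step 2 (consistency).} I will split the error as
\begin{equation*}
\hat R_n-R=\Big(\tfrac1n\textstyle\sum_i \ell_\theta(C_{\bar g}(W_i),X_i)-R\Big)+\tfrac1n\textstyle\sum_i\big(\ell_\theta(C_{\hat g}(W_i),X_i)-\ell_\theta(C_{\bar g}(W_i),X_i)\big).
\end{equation*}
Here $\bar g$ is the reference propensity: $\bar g=g^*$ in case (G), and any fixed $g$ (e.g.\ $\hat g$ itself) in case (Y). The first term is an average of i.i.d.\ bounded variables with mean $R(\theta)$ by Step 1, so the weak law of large numbers sends it to $0$. For the remainder, linearity and boundedness give the pointwise estimate $|\ell_\theta(C_{\hat g},X)-\ell_\theta(C_{\bar g},X)|\le 2LB\,|1/\hat g-1/\bar g|$. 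Cross-fitting is what makes the remainder tractable: on each fold $\hat g$ is independent of the evaluation points. Conditioning on the training folds, the expected remainder is therefore at most $2LB\,\varepsilon^{-2}\,\mathbb{E}|\hat g-g^*|$. This uses $|1/a-1/b|\le|a-b|/(ab)$ with both $a,b\ge\varepsilon$, which holds by clipping and overlap; clipping can only shrink $|\hat g-g^*|$ since $g^*\in[\varepsilon,1]$. Under (G) this bound tends to $0$, and Markov's inequality gives convergence in probability.

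\textbf{Step 3 (case (Y) directly).} Here I will not compare against $g^*$ at all. Instead I bound $|\ell_\theta(C_{\hat g},X)-\ell_\theta(e(Y),X)|\le L\,\|C_{\hat g}-e(Y)\|\le L(1+\varepsilon^{-1})\,2B\,\mathbf 1\{\hat Y\ne Y\}$, using $\hat g\ge\varepsilon$. Its mean is $O(\mathbb{P}(\hat Y\ne Y))\to0$, and combining with the LLN for $\tfrac1n\sum_i\ell_\theta(e(Y_i),X_i)$ gives the result. I expect the main obstacle to be Step 2. Rigorously justifying the conditional-independence use of cross-fitting needs a per-fold argument with a finite number of folds. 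One must also make explicit that the rate condition is on the population $L^1$ error of the fold-wise $\hat g$, and take care that the proxy $\hat Y$ may itself change with $n$ (a triangular array). I would handle the latter by applying the LLN in its $L^2$/Chebyshev form, which only needs uniformly bounded variances.
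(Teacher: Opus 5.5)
Your proof is correct, but the consistency part takes a different route from the paper.

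Step 1 matches the paper: condition on $V=(X,Z,\hat Y)$, use MAR to replace $A$ by $g^*(Z,\hat Y)$, and use linearity of $\ell_\theta$ in the label. The paper carries this through to the exact bias identity
\[
\mathbb{E}[\ell_\theta(C_g(W),X)]-R(\theta)=\mathbb{E}\Bigl[\tfrac{g(Z,\hat Y)-g^*(Z,\hat Y)}{g(Z,\hat Y)}\bigl(e(\hat Y)-f^*(V)\bigr)^\top s_\theta(X)\Bigr],
\]
whereas you only read off the two special cases.

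For consistency, the paper uses a per-fold conditional Chebyshev bound to show that $\hat R_n$ tracks the average of the conditional means given the training folds. It then plugs the frozen $\hat g^{(-j)}$ into the identity above. Both (G) and (Y) then follow from one bound on $\mathbb{E}\bigl[|\hat g-g^*|\,\|e(\hat Y)-f^*(V)\|_1\bigr]$, with (Y) handled via $\mathbb{E}\|e(\hat Y)-f^*(V)\|_1=2\mathbb{P}(\hat Y\ne Y)$.

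You instead treat the two cases separately. In case (Y) you compare directly with the full-label empirical risk, using $C_{\hat g}(W)-e(Y)=(A/\hat g-1)(e(Y)-e(\hat Y))$, which vanishes on $\{\hat Y=Y\}$. This is more elementary than the paper's route. It also shows that (Y)-consistency needs neither MAR, nor cross-fitting, nor positivity of $g^*$; only clipping and bounded scores are used, which is a genuinely sharper observation.

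In case (G) you compare with the oracle-weighted estimator built from $C_{g^*}$. You bound the difference pointwise via $|1/\hat g-1/g^*|\le|\hat g-g^*|/\varepsilon^2$, using cross-fitting only to transfer the out-of-sample condition (G) to the in-sample average. This branch has two costs relative to the paper:
\begin{itemize}
\item It uses positivity of $g^*$ essentially, both to keep $C_{g^*}$ bounded and for the $\varepsilon^{-2}$ bound. The paper's argument only ever divides by the clipped $\hat g$.
\item Because you take absolute values pointwise, you get a first-order bound in $\mathbb{E}|\hat g-g^*|$. This discards the product structure $|\hat g-g^*|\cdot\|e(\hat Y)-f^*(V)\|_1$ of the paper's identity, which is what would yield second-order (rate) double robustness rather than bare consistency.
\end{itemize}

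One cosmetic point: in Step 2, the clause taking $\bar g$ to be ``any fixed $g$ (e.g.\ $\hat g$ itself)'' in case (Y) does not make sense. $\hat g$ is data-dependent, and Step 1 gives exact unbiasedness only when $\hat Y=Y$ almost surely. Since Step 3 handles (Y) on its own, you can simply delete that clause.
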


This theorem shows that the corrected target $C_g(W)$ recovers the true risk $R(\theta)$ in expectation, and the empirical estimator is consistent when either the proxy labels are accurate (\textbf{Y}) or the propensity model is correct (\textbf{G}).
Thus, DRRC is robust to misspecification of either component alone.

In summary, we scale HuPER-Recognizer with a phoneme$\rightarrow$phone Corrector (Algorithm~\ref{alg:huper_self_learning}) and analyze the resulting self-learning objective through DRRC, which is consistent under either accurate proxies or a correct propensity model.
Next, HuPER-Perceiver converts the phone evidence into word transcripts using explicit lexical and LM constraints.

\section{HuPER-Perceiver}
\label{sec:perceiver}

HuPER-Perceiver converts HuPER-Recognizer's \emph{phone evidence} into word transcripts by composing it with explicit, auditable language constraints, following the classic HMM--GMM decoding recipe (acoustic model + lexicon + language model) but with a phone recognizer as the evidence source~\citep{rabiner2002tutorial,mohri2002weighted}.
Given an utterance $X$ and recognizer parameters $\theta$, the output is a word sequence $\hat T$.

\paragraph{Acoustic evidence as a phone lattice.}
Rather than using only the 1-best phone sequence $\hat Y_\theta$, we represent the recognizer output as a weighted phone acceptor (phone lattice) $\Pi_\theta(X)$.
Each path $y$ in $\Pi_\theta(X)$ corresponds to a candidate phone sequence, and is assigned a cost
$C_\theta(y\mid X)$ derived from the negative log evidence (e.g., frame-level phone posteriors or arc weights).
The decoded phone sequence $\hat Y_\theta$ is simply the best path in this lattice, while $\Pi_\theta(X)$ retains uncertainty needed for downstream constrained search.

\paragraph{Imposing lexical and linguistic constraints.}
To map phone hypotheses to words, we introduce a phone-to-word transducer $L$ (lexicon) and a word-level acceptor $G$ (language model).
$L$ restricts which phone sequences realize valid words, and $G$ provides sequence-level linguistic preferences.
Both $L$ and $G$ are modular constraints that can be swapped across domains without retraining the HuPER-Recognizer.

\paragraph{Search via WFST composition.}
Decoding is performed by composing the phone evidence with the constraints and extracting the shortest path in the unified search space:
\begin{equation}
\hat{T} \;=\; \mathrm{Output}\!\left(\mathrm{ShortestPath}\!\left(\Pi_\theta(X)\circ L \circ G\right)\right).
\end{equation}

\section{HuPER-Scheduler}
\label{sec:scheduler}

The HuPER-Scheduler acts as the system's \emph{planner}, orchestrating the flow between bottom-up phone evidence and top-down linguistic expectations.
By evaluating an \emph{evidence distortion score} $s(X)$ computed from HuPER-Recognizer emissions, it decides whether to decode phones directly, or to activate a reference-constrained refinement path.
In the guided path, we compile a \emph{Dysfluent WFST} constraint $\mathcal{H}(\cdot)$ from a reference word sequence (external $R$ or Perceiver 1-best hypothesis $\hat T$), and refine phone decoding by composing $\mathcal{H}$ with the recognizer evidence~\citep{conf/interspeech/GuoLZZLYPDEVMBW25,mohri2002weighted}.
Algorithm~\ref{alg:huper_scheduler} summarizes the routing logic.

\paragraph{Quantifying Evidence Distortion.}
To assess signal reliability, the Scheduler monitors the Recognizer's frame-level logits $\mathbf{z} \in \mathbb{R}^{T \times V}$ and posteriors $\mathbf{p}_t = \mathrm{softmax}(\mathbf{z}_t)$.
We quantify uncertainty through the posterior margin $m_t$ and normalized entropy $h_t$:
\begin{equation}
    m_t := p_{t,(1)} - p_{t,(2)}, \quad
    h_t := \frac{-\sum_{v=1}^V p_{t,v} \log p_{t,v}}{\log V}.
\end{equation}
These are combined into a frame-level distortion proxy
$d_t := \mathrm{clip}\!\left(\tfrac{1}{2}(1-m_t) + \tfrac{1}{2} h_t, 0, 1\right)$,
and aggregated into an utterance-level score
\begin{equation}
s(X) := \frac{1}{T}\sum_{t=1}^T d_t,
\label{eq:distortion_score}
\end{equation}
which drives routing decisions.

\paragraph{Dysfluent WFST constraint $\mathcal{H}(\cdot)$.}
Given a reference word sequence $U$ (either an external reference $R$ or a hypothesis $\hat T$), the Dysfluent WFST compiles a phone-space constraint $\mathcal{H}(U),$
which encodes a bounded set of plausible \emph{realized} phone sequences around the canonical pronunciation of $U$, while allowing dysfluent edits (insertions/deletions/substitutions)~\citep{conf/interspeech/GuoLZZLYPDEVMBW25}.
This constraint is then combined with the HuPER-Recognizer phone evidence for constrained shortest-path inference.

\begin{algorithm}[tb]
  \caption{HuPER-Scheduler: distortion-controlled reference-constrained phone refinement}
  \label{alg:huper_scheduler}
  \begin{algorithmic}
    \STATE {\bfseries Input:} utterance $X$; phone evidence graph $\Pi_\theta(X)$; lexicon $L$; LM $G$; threshold $\tau$; optional external reference $R$
    \STATE {\bfseries Output:} final phone sequence $\hat Y$
    \STATE Compute $s(X)$ using Equation~\ref{eq:distortion_score}.
    \IF{$s(X)\le \tau$}
        \STATE $\hat Y \leftarrow \mathrm{Output}(\mathrm{ShortestPath}(\Pi_\theta(X)))$.
    \ELSE
        \IF{$R$ is available}
            \STATE $U \leftarrow R$.
        \ELSE
            \STATE $\hat T \leftarrow \mathrm{Output}\!\left(\mathrm{ShortestPath}\!\left(\Pi_\theta(X)\circ L \circ G\right)\right)$.
            \STATE $U \leftarrow \hat T$.
        \ENDIF
        \STATE Compile Dysfluent WFST constraint $\mathcal{H}\leftarrow \mathcal{H}(U)$.
        \STATE $\hat Y \leftarrow \mathrm{Output}\!\left(\mathrm{ShortestPath}\!\left(\Pi_\theta(X)\circ \mathcal{H}\right)\right)$.
    \ENDIF
    \STATE \textbf{return} $\hat Y$
  \end{algorithmic}
\end{algorithm}

\section{Experimental setup}

As described in Sec.~\ref{sec:scheduler}, the HuPER-Scheduler supports three tasks:
(1) \textbf{Task 1:} Speech-only transcription with strong acoustic evidence, corresponding to standard phoneme recognition.
(2) \textbf{Task 2:} Speech-only transcription under varying signal quality, where the scheduler adaptively selects between a purely bottom-up path (for strong evidence) and a combined bottom-up–top-down path (for weak evidence).
(3) \textbf{Task 3:} Transcription with reference text provided, in which bottom-up and top-down perception are jointly performed.
For \textbf{Task 1}, we conduct \textit{phone recognition} experiments (Sec.~\ref{subsec:phone_eval}). For \textbf{Tasks 2 and 3}, we conduct \textit{multi-path speech perception} experiments (Sec.~\ref{subsec:weak_evidence}).

\subsection{Phone recognition}
\label{subsec:phone_eval}

\paragraph{Setup.}
We use WavLM-Large as the backbone and fine-tune it with a CTC~\cite{graves2006connectionist} objective for phone recognition.
Training follows our self-learning recipe.
We first train an initial HuPER-Recognizer on TIMIT~\cite{garofolo1993timit}.
We then apply this initial model to LibriSpeech~\cite{panayotov2015librispeech} to obtain teacher pseudo phone labels, and train a correction model (Sec.~\ref{sec:recognizer}) that refines the canonical G2P~\cite{mortensen2018epitran,black2001flite,conf/interspeech/ZhuZJ22} phone sequence using acoustic evidence.
The Corrector is trained for 49 epochs (about 40 minutes) on 2$\times$A6000 GPUs.
Audio is tokenized with HuBERT~\cite{9585401} units.
We use dropout $0.2$, AdamW~\cite{loshchilov2017decoupled} with $\beta=(0.9,0.999)$, and learning rate $2\times 10^{-4}$.
Finally, we train HuPER-Recognizer on the corrected pseudo labels for 100 epochs on 8$\times$A6000 GPUs (batch size 12, learning rate $3\times 10^{-5}$).
We freeze the WavLM transformer for the first 24k updates and train only the linear CTC head, then unfreeze all WavLM layers for full fine-tuning.
The final CTC training loss is $0.036$.

\paragraph{Evaluation metric: PFER.}
We report \textbf{Phonetic Feature Error Rate (PFER)}, a PanPhon-based articulatory-feature edit distance widely used in multilingual phone recognition~\cite{mortensen2016panphon}.
PFER computes a minimum-cost edit distance between hypothesis and reference phone sequences, where substitutions are weighted by distinctive-feature differences (thus giving partial credit to phonetically similar phones), and is normalized by the reference length.
Formal definition and costs are provided in Appx.~\ref{app:pfer}.

\paragraph{Baselines.}
We compare against widely used open-source universal phone recognizers (Table~\ref{tab:model_checkpoints}): \textbf{Allosaurus}~\cite{li2020universal}, \textbf{Allophant}~\cite{glocker23_interspeech}, \textbf{W2V2-eSpeak}~\cite{conf/interspeech/XuBA22,baevski2020wav2vec,conf/interspeech/ConneauBCMA21}, \textbf{MultIPA}~\cite{chen24c_interspeech}, \textbf{ZIPA}~\cite{conf/acl/ZhuSCM25}, and \textbf{POWSM}~\cite{li2025powsm}.

\paragraph{Evaluation datasets.}
We evaluate only on corpora with \textit{human-annotated} or \textit{human-verified} phone labels, so the error rates are grounded in human perception rather than automatic alignments.
Following the categorization used in the attached paper, we group test sets into \emph{English variation} and \emph{unseen-language transfer}. Dataset statistics and evaluation splits are summarized in Table~\ref{tab:eval_datasets_phone}.

\begin{table*}[ht]
  \centering
  \small
  \setlength{\tabcolsep}{4pt}
  \renewcommand{\arraystretch}{1.05}
  \caption{\textbf{PFER ($\downarrow$) on human-annotated / human-verified phone datasets.}
    All baseline numbers are reproduced using released checkpoints under a unified evaluation pipeline.
    Train (h) denotes the amount of training data reported/used by each model.
    DRC-SE = DoReCo South-England; SO762 = SpeechOcean762.
    $^\ast$Allosaurus reports training scale in utterance counts; we convert to hours assuming an average utterance duration of 4 seconds.}

  \label{tab:pfer_results}
  \begin{tabular}{l c r r r r r r}
    \toprule
    \textbf{Model} & \textbf{Train (h)} & \textbf{Buckeye} & \textbf{DRC-SE} & \textbf{L2-ARCTIC} & \textbf{EpaDB} & \textbf{SO762} & \textbf{Avg.} \\
    \midrule
    Allosaurus~\cite{li2020universal} & 2,600* & 44.03 & 25.36 & 13.03 & 12.82 & 16.73 & 22.72 \\
    Allophant~\cite{glocker23_interspeech} & 4,628 & 35.04 & 24.13 & 11.91 & 14.03 & 18.11 & 20.66 \\
    W2V2-eSpeak~\cite{conf/interspeech/XuBA22} & 5,300 & 27.50 & 18.57 & 8.77 & \textbf{9.59} & 14.62 & 15.83 \\
    MultIPA~\cite{chen24c_interspeech} & 3,600 & 18.69 & 23.31 & 15.52 & 15.64 & 21.34 & 18.28 \\
    ZIPA-CR-Large~\cite{conf/acl/ZhuSCM25} & 17,132 & 31.24 & 17.89 & 9.74 & 11.75 & 15.58 & 17.24 \\
    ZIPA-CR-NS-Large~\cite{conf/acl/ZhuSCM25} & 28,983 & 31.05 & 17.12 & 8.54 & 11.63 & 18.20 & 17.31 \\
    POWSM~\cite{li2025powsm} & 17,132 & 31.63 & 18.33 & 11.32 & 11.86 & 17.84 & 18.68 \\
    \midrule
    \textbf{HuPER-Recognizer (ours)} & \textbf{100} & \textbf{7.36} & \textbf{9.08} & \textbf{8.00} & 10.66 & \textbf{9.00} & \textbf{8.82} \\
    \bottomrule
  \end{tabular}

  \vspace{0.3em}
\end{table*}

\begin{figure*}
    \centering 
    \includegraphics[width=1.0\linewidth]{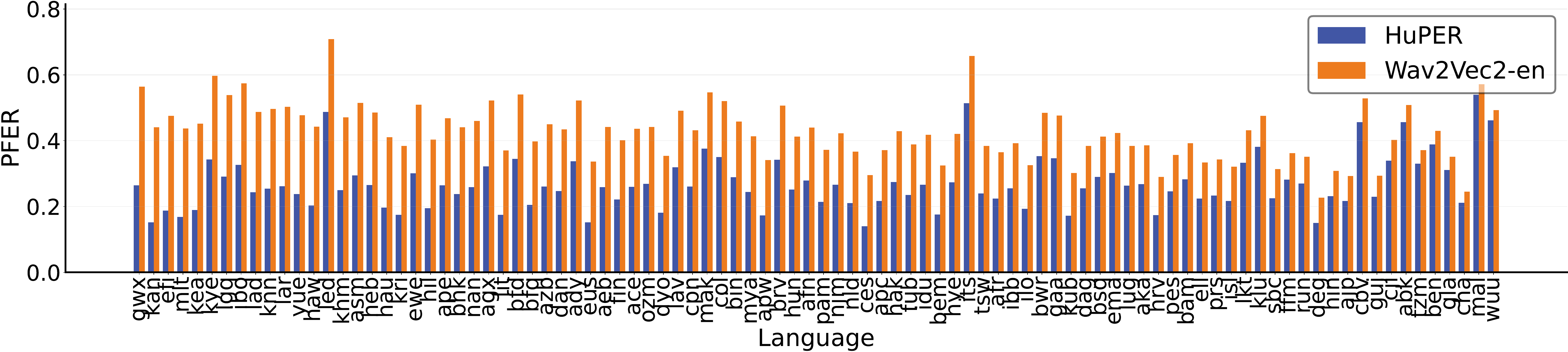} 
    \caption{\textbf{Zero-shot multilingual phone recognition on VoxAngeles (95 languages).}
    Per-language phone error rate (PFER, $\downarrow$) for HuPER-Recognizer (trained only on English) and an English-only public phoneme model \texttt{Wav2Vec2-en} (fine-tuned on G2P-generated phoneme labels from LJSpeech).
    HuPER improves on the majority of languages and reduces the macro-average PFER from 0.35 to 0.19.}
    \label{fig:zero_shot_multilingual}
\end{figure*}

\textbf{English variation.}
We evaluate on Buckeye~\cite{pitt2005buckeye}, which contains spontaneous conversational English.
To probe dialectal variation, we additionally test on DRC-SE (DoReCo South-England), a dialect subset from DoReCo~\cite{paschen2020doreco}.
To measure robustness to non-native pronunciations, we evaluate on L2-ARCTIC~\cite{zhao2018l2arctic}, EpaDB~\cite{vidal2019epadb}, and SpeechOcean762~\cite{zhang2021speechocean762}, which contain L2 English speech with verified phone-level annotations.
For L2-ARCTIC, we use the manually annotated \emph{perceived} transcriptions rather than dictionary/G2P pronunciations, so the reference reflects what speakers actually produced.

\textbf{Unseen languages (zero-shot).}
We further test multilingual transfer on VoxAngeles~\cite{conf/coling/ChodroffPBM24}, a post-processed version of the UCLA Phonetics Lab Archive with human-verified transcriptions spanning 95 languages.
This setting evaluates zero-shot generalization to languages and phone inventories never seen during training.

\subsection{Multi-path speech perception}
\label{subsec:weak_evidence}

\paragraph{Dataset.}
For both \textbf{Task-2} and \textbf{Task-3}, we evaluate on a primary progressive aphasia (PPA) reading dataset~\cite{gorno2011classification} (\texttt{nfvPPA}, 35 speakers).
We manually annotated \emph{spoken} phone sequences (1h1m12s total).
Each utterance provides audio $X$, a passage text $R$ (available but optionally hidden), and a human phone reference $Y$.

\paragraph{Experimental Configuration.}
For \textbf{Task-3}, we follow the Dysfluent-WFST pipeline~\cite{conf/interspeech/GuoLZZLYPDEVMBW25}, replacing the original phonetic encoder with our HuPER-Recognizer. For \textbf{Task-2} with dynamic planning, we first compute a distortion score $s(X)$ from HuPER-Recognizer posteriors (Sec.~\ref{sec:scheduler}) to characterize recognition difficulty. We then analyze whether $s(X)$ correlates with recognition performance and whether hypothesis-guided constrained decoding provides greater benefits than bottom-up 1-best decoding under high distortion. For both tasks, we treat the switching threshold $\tau$ as an externally specified hyperparameter and report results across a sweep of $\tau$. The distortion-based configuration for \textbf{Task-2} is presented as follows:

\paragraph{Distortion Computation (Task-2).}
For each utterance, we compute an utterance-level distortion score $s(X)$ by aggregating the frame-level proxy $\{d_t\}_{t=1}^T$ defined in Sec.~\ref{sec:scheduler}. We compare two phonetic encoders (HuPER and Wav2Vec2Phoneme~\cite{conf/interspeech/XuBA22}) under three decoding modes:
(i) \textbf{1-best}: CTC decoding from phone evidence $\Pi(X)$;
(ii) \textbf{refine}: constrained decoding with a hypothesis-conditioned WFST $\mathcal{H}(\hat{T}(X))$~\cite{conf/interspeech/GuoLZZLYPDEVMBW25} built from Perceiver hypotheses. We report Spearman's rank correlation coefficient $\rho$ between $s(X)$ and PFER across PPA utterances to validate $s(X)$ as an uncertainty proxy (Fig.~\ref{fig:ppa_distortion_corr}).

\paragraph{Dynamic Switching Analysis (Task-2).}
To quantify when constrained inference is beneficial under dynamic planning, we define a distortion-threshold switch between HuPER 1-best and HuPER refine decoding:
\begin{equation}
    \hat{Y}_{\tau}(X)=
    \begin{cases}
    \hat{Y}_{\mathrm{1best}}(X), & s(X)\le\tau,\\
    \hat{Y}_{\mathrm{refine}}(X), & s(X)>\tau,
    \end{cases}
\end{equation}
We report the average PFER as a function of $\tau$ to identify distortion regimes in which constrained decoding yields the largest gains.

\section{Result}

\subsection{Task-1: English phone recognition}
\label{subsec:results_recognizer}

\textbf{HuPER-Recognizer achieves the best average PFER with orders-of-magnitude less supervision and a more compact label space, demonstrating the effectiveness and transferability of our DRRC-based acoustic--phonetic learning.}
Table~\ref{tab:pfer_results} reports PFER on five corpora with human-annotated (or human-verified) phone labels.
HuPER-Recognizer achieves the \textbf{best average} PFER (8.82) while using only 100 hours of English training data, whereas all compared baselines rely on orders-of-magnitude more supervision.
This result highlights that our DRRC recipe (Sec.~\ref{sec:recognizer}) can turn a small amount of human phone annotation into strong, transferable acoustic--phonetic evidence. We note that HuPER uses a compact phone inventory with only 42 symbols, which is smaller than the vocabularies used by several baselines.
A smaller inventory reduces representational resolution and can make certain fine-grained contrasts unexpressible, which may disadvantage HuPER under PFER (a feature-sensitive edit distance).
Therefore, the average-best performance is achieved despite a conservative label space.

\begin{figure}[t]
    \centering
    \includegraphics[width=0.9\linewidth]{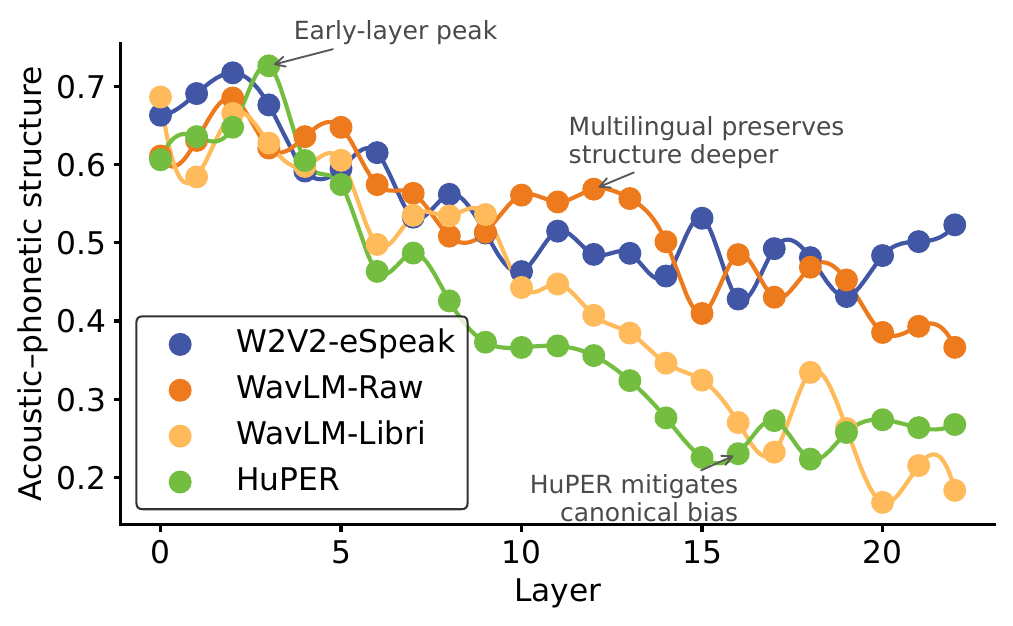}
    \caption{\textbf{Centroid RSA to acoustic--phonetic geometry.}
    Spearman correlation between pairwise phone-centroid cosine distances and PanPhon distinctive-feature distances across layers.
    Higher values indicate stronger acoustic--phonetic organization.}
    \label{fig:phfeat_rsa_centroid}
\end{figure}

\subsection{Task-1: Zero-shot multilingual transfer}
\textbf{HuPER-Recognizer matches strong multilingual baselines in strict zero-shot cross-lingual transfer, despite being trained only on English.}
We evaluate cross-lingual generalization on VoxAngeles (95 languages) in a strict zero-shot setting.
Trained only on English, HuPER-Recognizer achieves a macro-average PFER of 0.19.
In contrast, a public English-only phoneme model \texttt{Wav2Vec2-en}, fine-tuned on G2P-generated phoneme labels from LJSpeech, yields a much higher PFER of 0.35.
Figure~\ref{fig:zero_shot_multilingual} reports per-language results: HuPER improves over \texttt{Wav2Vec2-en} on the majority of languages, indicating robust transfer to unseen phone inventories.
Moreover, among the multilingual baselines we evaluate, the best system (\texttt{W2V2-eSpeak}) reaches an average PFER of 0.19, which HuPER matches despite being trained only on English.
Full per-language results are provided in Table~\ref{tab:voxangles}.

\subsection{Flexible speech perception (Task2, Task3)}
\label{sec:weak_results}

\textbf{Task-3: HuPER-Recognizer consistently outperforms existing phonetic encoders on disordered speech, indicating more robust acoustic--phonetic representations.} 
With reference text, constrained decoding further improves robustness, and HuPER + reference achieves the best overall PFER on \texttt{nfvPPA} (Table~\ref{tab:ppa_weak}), outperforming the Dysfluent-WFST baseline.

\textbf{Task-2: HuPER achieves substantial gains under weak acoustic evidence through distortion-guided activation of top-down constraints.}
On \texttt{nfvPPA}, the distortion score correlates positively with HuPER 1-best PFER (Figure~\ref{fig:ppa_distortion_corr}), validating it as an uncertainty proxy. While HuPER 1-best already improves over Wav2Vec2, distortion-controlled switching yields the main gains, substantially reducing overall PFER (Table~\ref{tab:ppa_weak}). Threshold analysis and distortion-bin breakdown (Figures~\ref{fig:ppa_tau_sweep}, \ref{fig:ppa_distortion_bins}) show that improvements concentrate in high-distortion segments, where constrained decoding is most effective. Failure cases are discussed in Appendix~\ref{sec:failure_cases}.

\begin{figure*}[t]
    \centering
    \begin{subfigure}[t]{0.33\linewidth}
        \centering
        \includegraphics[width=\linewidth]{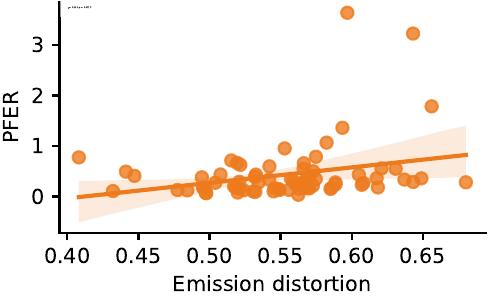}
        \caption{1-best PFER vs.\ distortion.}
        \label{fig:ppa_distortion_corr}
    \end{subfigure}\hfill
    \begin{subfigure}[t]{0.33\linewidth}
        \centering
        \includegraphics[width=\linewidth]{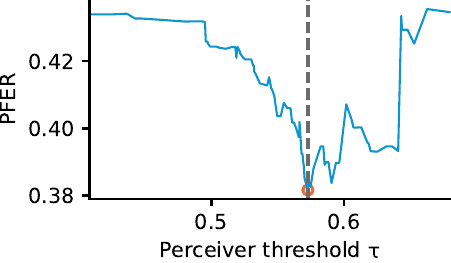}
        \caption{Perceiver threshold sweep for switching.}
        \label{fig:ppa_tau_sweep}
    \end{subfigure}\hfill
    \begin{subfigure}[t]{0.33\linewidth}
        \centering
        \includegraphics[width=\linewidth]{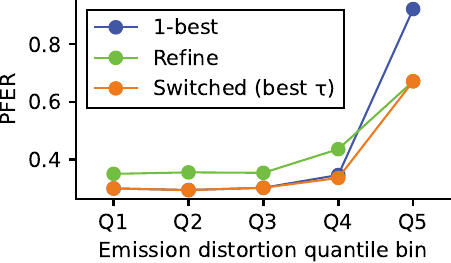}
        \caption{Refine helps in high-distortion bins.}
        \label{fig:ppa_distortion_bins}
    \end{subfigure}

    \caption{\textbf{Distortion as a control signal for refinement on PPA.}
    \textbf{(a)} Using HuPER 1-best throughout, emission distortion is positively correlated with PFER, indicating it tracks weak-evidence difficulty.
    \textbf{(b)} Sweeping the switching threshold $\tau$ reveals an optimal region for triggering refinement: we output 1-best when $s(X)\!\le\!\tau$, and otherwise invoke the perceiver to decode words from phone evidence and refine the phone sequence.
    \textbf{(c)} Stratifying utterances by distortion explains why switching helps: refinement yields the largest gains in high-distortion regimes, while low-distortion cases are best handled by direct decoding.}

    \label{fig:ppa_distortion_triptych}
\end{figure*}





\begin{table}
\centering
\small
\caption{\textbf{Weak-evidence speech perception on \texttt{nfvPPA}.}
Overall PFER for different decoding modes; HuPER-switched applies refinement only when $s(X)>\tau^\star$ (refine rate shown).}
\label{tab:ppa_weak}
\begin{tblr}{
  column{2} = {c},
  column{3} = {c},
  vline{2} = {-}{},
  hline{1-2,5,7} = {-}{},
}
Method                        & Overall PFER $\downarrow$ & Refine rate \\
Wav2vec2 1-best               & 0.46                     & --           \\
HuPER-recognizer 1-best                  & 0.44                     & --           \\
HuPER switched ($\tau^\star$) & \textbf{0.38}            & 31.1\%      \\
Dysfluent WFST                & 0.35                     & 100\%           \\
HuPER + reference             & \textbf{0.32}                     & 100\%          
\end{tblr}
\end{table}

\section{Understanding HuPER-Recognizer Gains}
\label{sec:analysis}

\subsection{Embedding analysis: acoustic--phonetic geometry}
\label{subsec:analysis_phgeom}

\textbf{HuPER maintains stronger alignment to distinctive-feature geometry in mid/late layers than an English G2P-supervised baseline.}
We measure layer-wise centroid RSA between phone representations and PanPhon distinctive-feature distances~\cite{mortensen2016panphon} on TIMIT.
Figure~\ref{fig:phfeat_rsa_centroid} shows that all models peak in early layers (around layer 2--4), but \textbf{HuPER remains more aligned than \emph{WavLM-Libri} deeper in the network}, suggesting DRRC-style acoustic correction helps preserve realized-phone structure beyond shallow acoustic features.
Full RSA setup is in Appendix~\ref{app:analysis_rsa}, and additional acoustic/articulatory reference analyses are in Appendix~\ref{app:mocha_zero_shot}.

\subsection{Emission analysis: canonical restoration}
\label{subsec:analysis_emission}

\textbf{HuPER’s emissions consistently favor the realized sequence on the same acoustics, indicating less emission-level canonical ``auto-correction'' than an XLSR baseline.}
We evaluate controlled contrasts in three settings where canonical restoration is tempting: \textbf{glottalization}, \textbf{flaps}, and \textbf{stop reductions in clusters}.
For each waveform, we compare CTC evidence for a canonical G2P sequence versus a manually verified realized sequence using a length-normalized preference score.
Figure~\ref{fig:emission_scatter} shows that the XLSR baseline is more often canonical-restoring (especially for glottalization and cluster reductions), while \textbf{HuPER remains more acoustic-faithful}.
This matches our design goal: bottom-up emissions should track available cues, and any canonical restoration should be applied explicitly by higher-level constraints when needed.
Diagnostic-set construction, the exact score definition, and the full text/label pairs are provided in Appendix~\ref{app:emission_diag}.

\begin{figure}[t]
\centering
\includegraphics[width=0.75\linewidth]{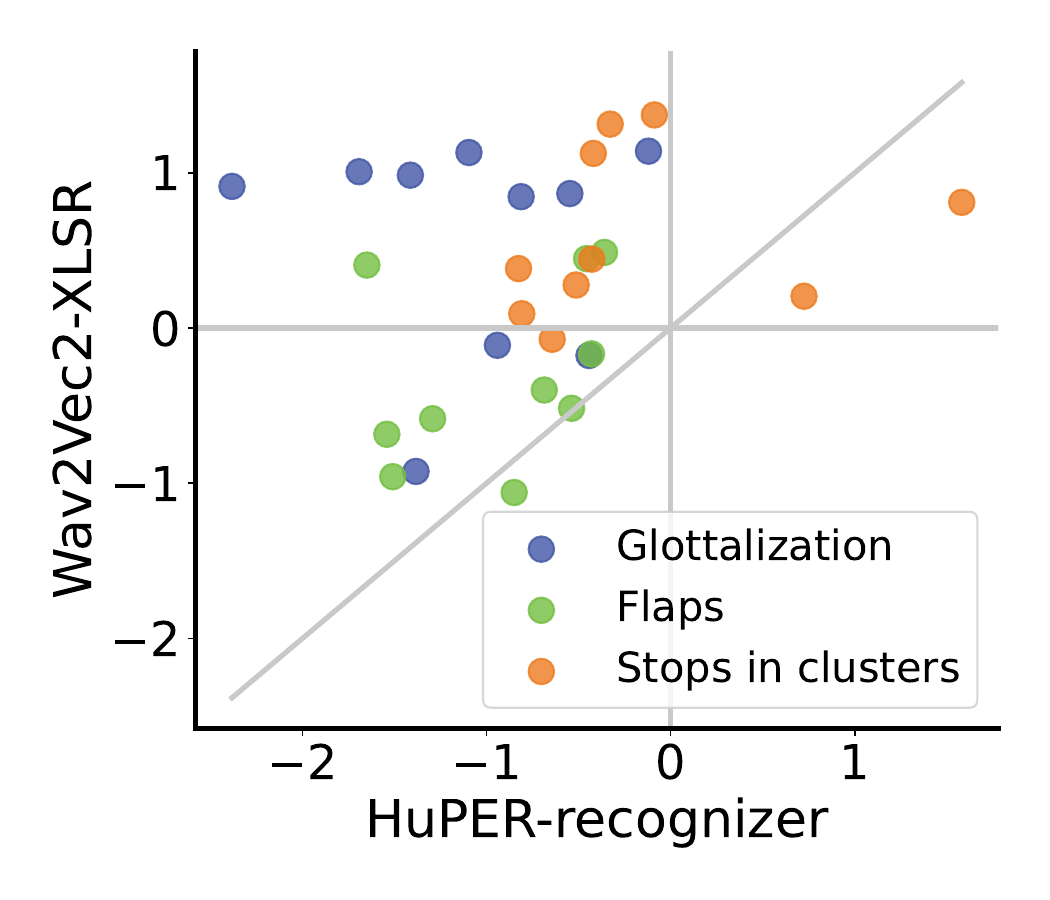}
\caption{\textbf{Emission-level canonical-restoration diagnostic.}
Scatter of canonical-vs-realized preference scores on identical utterances: HuPER (x-axis) vs.\ XLSR (y-axis), colored by category. Points above the diagonal ($y=x$) indicate stronger canonical restoration by the baseline.
}
\label{fig:emission_scatter}
\end{figure}

\section{Conclusion and Limitations}
In this work, we view phonetic perception as an adaptive and controllable inference process and introduce HuPER as an initial step toward explicit, human-centered modeling. Our results highlight the value of combining phonetic representations, structured constraints, and dynamic control for robust and interpretable perception. However, the current system is still limited by the small amount of human-verified spoken-phone labels, relies on heuristic and rule-based routing, and models only a small subset of the inference pathways involved in real human perception. Addressing these limitations with larger annotated datasets, learned control policies, and more expressive perceptual models is an important direction for future work.

\section{Acknowledgement}
Thanks for support from UC Noyce Initiative, Society of Hellman Fellows, NIH/NIDCD, and the Schwab Innovation fund. We thank UCSF team for providing access to part of the clinical dataset used in this study.

\section*{Impact Statement}

This work aims to advance speech and language technologies by improving the reliability and interpretability of phonetic representations. More accurate and acoustically grounded phonetic modeling can substantially benefit assistive applications in education, healthcare, and accessibility, where transcription errors and hallucinated outputs often undermine user trust and efficiency. By reducing such errors, HuPER has the potential to support more reliable screening, assessment, and communication tools for diverse populations. Beyond immediate applications, HuPER provides a scalable phonetic foundation that may facilitate the development of more expressive and flexible speech generation systems, including text-to-speech models based on acoustically meaningful tokens. More broadly, by framing speech perception as an adaptive and modular inference process, this work points toward future speech foundation systems that integrate perception, reasoning, and control, rather than isolated task-specific models.

\bibliography{mybib}
\bibliographystyle{icml2026}

\newpage
\appendix
\onecolumn

\section{Related Work}

\paragraph{Phonetic recognition models}

Traditional phonetic models have been closely co-developed with ASR paradigms and are primarily data-driven, relying on lexicon-derived targets and large-scale supervision, evolving from HMM–GMM~\cite{rabiner2002tutorial,mark2024application} systems to hybrid HMM–DNN~\cite{hinton2012deep} and end-to-end architectures~\cite{graves2006connectionist}. Such systems~\cite{li2020universal,glocker23_interspeech} assign canonical phonemes rather than realized phones, and progress in phone recognition has therefore depended heavily on costly manual annotations. In parallel, linguistically driven approaches incorporate phonological and articulatory priors~\cite{mortensen2016panphon}, such as distinctive features and allophonic modeling, to improve generalization and cross-lingual transfer, but rely on curated linguistic resources and handcrafted representations, limiting their scalability. From a cognitive perspective, phoneme perception itself is a multi-path, closed-loop process that integrates bottom-up acoustic evidence with top-down linguistic and contextual expectations~\cite{mcclelland1986trace, norris2003perceptual}. To date, these three paradigms remain largely disconnected.

\paragraph{Explicit modeling of human-inspired speech perception} 

Early expert systems~\cite{liao2005expert-review} attempted to emulate human cognition through rigid pipelines, which were later shown not to be faithful proxies~\cite{international2025brain}, as cognition emerges from dynamic interactions among functional modules. While cognitive language models, such as~\cite{alkhamissi2025mixture}, introduce brain-inspired modular specialization, these designs remain domain-specific and are not tailored to speech perception. In the context of speech modeling, audio agents~\cite{huang2024audiogpt,liu2025wavjourney} focus on high-level task orchestration and act mainly as wrappers around existing models, without explicitly modeling perceptual mechanisms. Speech world models~\cite{zhou2025speech-world-model} represent early attempts to modularize speech perception, but operate primarily at the prompt level rather than directly decomposing perceptual representations.

\newcommand{\E}{\mathbb{E}}
\newcommand{\Pp}{\mathbb{P}}
\newcommand{\1}{\mathbf{1}}
\newcommand{\R}{\mathbb{R}}
\newcommand{\DeltaK}{\Delta_K}

\section{Doubly Robust Consistency of HuPER Corrector}

\subsection{Setup and notation}
Fix $K\geq 2$. Let $W=(X,Z,Y,\hat Y,A)$ be a generic draw, where:
(i) $X$ are features used by a predictive model $p_\theta(\cdot\mid X)$,
(ii) $Z$ are covariates governing label missingness,
(iii) $Y\in\{1,\dots,K\}$ is the true class label,
(iv) $\hat Y\in\{1,\dots,K\}$ is an always-observed proxy label,
(v) $A\in\{0,1\}$ indicates whether $Y$ is observed (so $AY$ is observed, but $Y$ may be missing when $A=0$). 
Let $e(k)\in\R^K$ denote the one-hot vector with a $1$ in coordinate $k$.
Write $e(Y)$ and $e(\hat Y)$ for the corresponding (random) one-hot vectors of $Y$ and $\hat Y$. Define the multiclass log-loss for any label vector $q\in\R^K$ by
\begin{align*}
\ell_\theta(q,x) := -\sum_{k=1}^K q_k \log p_\theta(k\mid x).
\end{align*}
In particular, $\ell_\theta(e(Y),X)=-\log p_\theta(Y\mid X)$.
The target population risk is
\begin{align*}
R(\theta) := \E\big[\ell_\theta(e(Y),X)\big].
\end{align*}

Define the score vector $s_\theta(X)\in\R^K$ by $s_{\theta,k}(X):=-\log p_\theta(k\mid X)$, so that
\begin{align*}
\ell_\theta(q,X) = q^\top s_\theta(X).
\end{align*}

\paragraph{Missingness model and nuisance functions.}
Define the true propensity (missingness) function
\begin{align*}
g^*(z,\hat y) := \Pp(A=1\mid Z=z,\hat Y=\hat y).
\end{align*}
Define also the conditional class distribution given the always-observed variables
$V:=(X,Z,\hat Y)$:
\begin{align*}
f^*(x,z,\hat y) := \E\big[e(Y)\mid X=x,Z=z,\hat Y=\hat y\big]\in\DeltaK,
\end{align*}
where $\DeltaK:=\big\{q\in\R^K: q_k\geq 0,\ \sum_{k=1}^K q_k=1\big\}$.

\paragraph{Proxy-baseline AIPW corrector.}
For any measurable $g: \mathcal{Z}\times\{1,\dots,K\}\to(0,1]$, define
\begin{align*}
C_g(W) := e(\hat Y) + \frac{A}{g(Z,\hat Y)}\big(e(Y)-e(\hat Y)\big).
\end{align*}
Note that $C_g(W)$ need not lie in $\DeltaK$ (it can have negative entries when $A=1$ and $g(Z,\hat Y)<1$),
but $\ell_\theta(\cdot,X)$ is linear in its first argument, so $\ell_\theta(C_g(W),X)$ is well-defined.

\paragraph{Cross-fitted risk estimator.}
Let $W_1,\dots,W_n$ be i.i.d.\ copies of $W$.
Split $\{1,\dots,n\}$ into $J\geq 2$ folds $I_1,\dots,I_J$ with $|I_j|\to\infty$ and $J$ fixed.
For each fold $j$, fit an estimator $\hat g^{(-j)}$ using only data in $\{W_i:i\notin I_j\}$,
and enforce $\hat g^{(-j)}(z,\hat y)\in[\varepsilon,1]$ for some fixed $\varepsilon\in(0,1)$ (e.g.\ by clipping).
For each $i\in I_j$, set $\hat g_i := \hat g^{(-j)}$ and $C_i := C_{\hat g_i}(W_i)$.
Define the empirical risk
\begin{align*}
\hat R_n(\theta) := \frac{1}{n}\sum_{i=1}^n \ell_\theta(C_i,X_i).
\end{align*}

\subsection{Main result}

We will work on the following assumptions standard in semi-parametric statistics~\citep{kang2007demystifying,wager2018estimation,angelopoulos2023prediction}.

\begin{assumption}[MAR depending only on $(Z,\hat Y)$]\label{ass:mar}
$\Pp(A=1\mid X,Z,Y,\hat Y)=g^*(Z,\hat Y)$ almost surely.
\end{assumption}

\begin{assumption}[Positivity]\label{ass:pos}
There exists $\varepsilon>0$ such that $g^*(Z,\hat Y)\geq \varepsilon$ almost surely.
\end{assumption}

\begin{assumption}[Bounded log score at fixed $\theta$]\label{ass:bound}
There exists $M_\theta<\infty$ such that $\max_{1\leq k\leq K} |s_{\theta,k}(X)| \leq M_\theta$ almost surely.
Equivalently, $\max_{1\leq k\leq K} |\log p_\theta(k\mid X)|\leq M_\theta$ almost surely.
\end{assumption}

\begin{assumption}[Cross-fitted range constraint]\label{ass:clip}
For each fold $j$, the fitted $\hat g^{(-j)}$ satisfies $\hat g^{(-j)}(z,\hat y)\in[\varepsilon,1]$ for all $(z,\hat y)$.
\end{assumption}

We will show consistency under either of the following alternative conditions:
\begin{align*}
\text{(G) Propensity consistency:}\quad & \E\big[|\hat g^{(-j)}(Z,\hat Y)-g^*(Z,\hat Y)|\big]\to 0 \text{ for each } j,\\
\text{(Y) Proxy label consistency:}\quad & \Pp(\hat Y\ne Y)\to 0.
\end{align*}

\begin{theorem}[Proxy-baseline AIPW corrector is doubly robust]\label{thm:dr}
Assume \ref{ass:mar}--\ref{ass:clip} and fix $\theta$.

\noindent\textbf{1. Exact population bias identity (double robustness).}
For any measurable $g:\mathcal{Z}\times\{1,\dots,K\}\to[\varepsilon,1]$,
\begin{align*}
\E\big[\ell_\theta(C_g(W),X)\big] - R(\theta)
=
\E\left[
\frac{g(Z,\hat Y)-g^*(Z,\hat Y)}{g(Z,\hat Y)}
\big(e(\hat Y)-f^*(X,Z,\hat Y)\big)^\top s_\theta(X)
\right].
\end{align*}
In particular:
\begin{itemize}
\item If $g=g^*$ almost surely, then $\E[\ell_\theta(C_g(W),X)]=R(\theta)$ exactly.
\item If $\hat Y=Y$ almost surely, then $f^*(X,Z,\hat Y)=e(\hat Y)$ almost surely and hence
$\E[\ell_\theta(C_g(W),X)]=R(\theta)$ for any $g$ bounded away from $0$.
\end{itemize}

\noindent\textbf{2. Doubly robust consistency of the cross-fitted risk estimator.}
The cross-fitted estimator $\hat R_n(\theta)$ satisfies
\begin{align*}
\hat R_n(\theta) - R(\theta) \to 0 \quad\text{in probability as } n\to\infty,
\end{align*}
provided either (G) holds or (Y) holds.
\end{theorem}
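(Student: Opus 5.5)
Part 1 rests on the linearity $\ell_\theta(q,X)=q^\top s_\theta(X)$. We have
\[
\ell_\theta(C_g(W),X)-\ell_\theta(e(Y),X)=\Big(1-\tfrac{A}{g(Z,\hat Y)}\Big)\big(e(\hat Y)-e(Y)\big)^\top s_\theta(X).
\]
We take expectations by conditioning first on $(X,Z,Y,\hat Y)$. By Assumption~\ref{ass:mar}, $\E[A\mid X,Z,Y,\hat Y]=g^*(Z,\hat Y)$. Since $g$ and $s_\theta$ are measurable in the conditioning variables, the inner expectation is
\[
\tfrac{g-g^*}{g}\,(e(\hat Y)-e(Y))^\top s_\theta(X).
\]
We then condition on $V=(X,Z,\hat Y)$. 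The prefactor, $e(\hat Y)$ and $s_\theta(X)$ are all $V$-measurable, so $e(Y)$ is replaced by $f^*(X,Z,\hat Y)$, which gives the stated identity. Integrability is immediate: $|A/g|\le 1/\varepsilon$, one-hot vectors have bounded norm, and $|s_{\theta,k}|\le M_\theta$ by Assumption~\ref{ass:bound}.

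The two bullets follow at once. If $g=g^*$, the prefactor vanishes. If $\hat Y=Y$ a.s., then $e(Y)$ is $V$-measurable, so $f^*=e(\hat Y)$. We also record the bound
\[
\big|\E[\ell_\theta(C_g,X)]-R(\theta)\big|\le \tfrac{2M_\theta}{\varepsilon}\,\E|g-g^*|.
\]
As an alternative, using the pre-conditioning form, the bias is at most $\tfrac{2M_\theta}{\varepsilon}\Pp(\hat Y\neq Y)$, because $e(\hat Y)-e(Y)=0$ off the event $\{\hat Y\ne Y\}$ and $\|e(\hat Y)-e(Y)\|_1\le 2$.

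For Part 2, we decompose fold by fold. Fix $j$ and write $\hat R_n=\sum_j \frac{|I_j|}{n}\hat R^{(j)}$, where $\hat R^{(j)}$ is the average of $\ell_\theta(C_{\hat g^{(-j)}}(W_i),X_i)$ over $i\in I_j$. Since $J$ is fixed, it suffices to show $\hat R^{(j)}-R(\theta)\to 0$ in probability for each $j$. Conditional on the out-of-fold data $\mathcal{D}_{-j}$, the function $\hat g^{(-j)}$ is fixed and takes values in $[\varepsilon,1]$ (Assumption~\ref{ass:clip}), and the $W_i$ with $i\in I_j$ are i.i.d. and independent of it. The summands are bounded by
\[
B:=M_\theta\big(1+\tfrac{2}{\varepsilon}\big).
\]
The conditional Chebyshev inequality therefore gives
\[
\Pp\big(|\hat R^{(j)}-m_j|>\delta\mid\mathcal{D}_{-j}\big)\le \frac{B^2}{|I_j|\,\delta^2},
\]
where $m_j:=\E[\ell_\theta(C_{\hat g^{(-j)}}(W),X)\mid \mathcal{D}_{-j}]$. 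This bound is uniform in $\mathcal{D}_{-j}$, so it also holds unconditionally and tends to $0$ as $|I_j|\to\infty$.

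It remains to show $m_j-R(\theta)\to 0$ in probability. Here we apply the Part 1 identity with $g=\hat g^{(-j)}$ held fixed, which is legitimate by independence. Under (Y), $|m_j-R|\le \frac{2M_\theta}{\varepsilon}\Pp(\hat Y\ne Y)\to 0$ deterministically. Under (G), $|m_j-R|\le \frac{2M_\theta}{\varepsilon}\E\big[|\hat g^{(-j)}-g^*|\mid\mathcal{D}_{-j}\big]$, and this is a nonnegative variable whose expectation tends to $0$ by (G), so it tends to $0$ in probability by Markov's inequality.

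The main subtlety is interpretive rather than technical. Condition (Y) asserts that $\Pp(\hat Y\ne Y)\to 0$ along $n$, so the distribution of $W$ implicitly depends on $n$ (a triangular array). The bias bound and the Chebyshev bound above are uniform in the law of $W$, with only $M_\theta$, $\varepsilon$ and $K$ entering. I would point this out explicitly so that the argument remains valid in that setting.
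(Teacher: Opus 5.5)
Your proof is correct and follows the paper's argument: linearity of $\ell_\theta$ in its first argument, the tower property with Assumption~\ref{ass:mar} to obtain the bias identity, the uniform bound $(1+2/\varepsilon)M_\theta$ with a conditional Chebyshev step on each fold, and the bias bound $\frac{2M_\theta}{\varepsilon}\mathbb{E}|\hat g^{(-j)}-g^*|$ under (G). The differences are cosmetic: you condition on $(X,Z,Y,\hat Y)$ before $V$ instead of invoking $A\perp Y\mid V$, and you handle (Y) through the pre-conditioning form with $\|e(\hat Y)-e(Y)\|_1=2\,\mathbf{1}\{\hat Y\ne Y\}$ rather than the paper's identity $\mathbb{E}\|e(\hat Y)-f^*(V)\|_1=2\,\mathbb{P}(\hat Y\ne Y)$; your explicit Markov step under (G) and your triangular-array remark spell out points the paper leaves implicit.
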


\begin{proof}

By definition of $s_\theta(X)$, we write
\begin{align*}
\ell_\theta(q,X) = q^\top s_\theta(X)\quad\text{for all }q\in\R^K.
\end{align*}
Using the definition of $C_g(W)$ and the above, we have
\begin{align}\label{eq:expand-loss}
\ell_\theta(C_g(W),X)
&= \ell_\theta(e(\hat Y),X)
+ \frac{A}{g(Z,\hat Y)}\Big(\ell_\theta(e(Y),X)-\ell_\theta(e(\hat Y),X)\Big).
\end{align}

Let $V:=(X,Z,\hat Y)$ and define
\begin{align*}
\Delta := \ell_\theta(e(Y),X)-\ell_\theta(e(\hat Y),X).
\end{align*}
Under Assumption \ref{ass:mar},
\begin{align*}
\E[A\mid V,Y] = \E[A\mid X,Z,Y,\hat Y] = g^*(Z,\hat Y),
\end{align*}
which implies $A$ is conditionally independent of $Y$ given $V$.
Therefore, by iterated expectations,
\begin{align*}
\E[A\Delta\mid V]
= \E\big[\E[A\Delta\mid V,Y]\mid V\big]
= \E\big[\E[A\mid V,Y]\Delta\mid V\big]
= g^*(Z,\hat Y)\E[\Delta\mid V].
\end{align*}
Hence,
\begin{align}\label{eq:step-3}
\E\left[\frac{A}{g(Z,\hat Y)}\Delta\ \middle|\ V\right]
=
\frac{g^*(Z,\hat Y)}{g(Z,\hat Y)}\E[\Delta\mid V].
\end{align}

Since $s_\theta(X)$ is $V$-measurable (it depends only on $X$),
\begin{align*}
\E\big[\ell_\theta(e(Y),X)\mid V\big]
&= \E\big[e(Y)^\top s_\theta(X)\mid V\big]
= \E[e(Y)\mid V]^\top s_\theta(X)
= f^*(V)^\top s_\theta(X)
= \ell_\theta(f^*(V),X).
\end{align*}
Consequently,
\begin{align}\label{eq:step-4}
R(\theta)
= \E\big[\ell_\theta(e(Y),X)\big]
= \E\big[\E[\ell_\theta(e(Y),X)\mid V]\big]
= \E\big[\ell_\theta(f^*(V),X)\big].
\end{align}

Take conditional expectations of the expansion in Eq.~\eqref{eq:expand-loss} given $V$ and use Eq.~\eqref{eq:step-3}:
\begin{align*}
\E[\ell_\theta(C_g(W),X)\mid V]
&= \ell_\theta(e(\hat Y),X)
+ \frac{g^*(Z,\hat Y)}{g(Z,\hat Y)}\Big(\E[\ell_\theta(e(Y),X)\mid V]-\ell_\theta(e(\hat Y),X)\Big).
\end{align*}
Using Eq.~\eqref{eq:step-4} to substitute $\E[\ell_\theta(e(Y),X)\mid V]=\ell_\theta(f^*(V),X)$ gives
\begin{align*}
\E[\ell_\theta(C_g(W),X)\mid V]
&= \ell_\theta(e(\hat Y),X)
+ \frac{g^*(Z,\hat Y)}{g(Z,\hat Y)}\Big(\ell_\theta(f^*(V),X)-\ell_\theta(e(\hat Y),X)\Big)\\
&= \ell_\theta(f^*(V),X)
+ \left(1-\frac{g^*(Z,\hat Y)}{g(Z,\hat Y)}\right)\Big(\ell_\theta(e(\hat Y),X)-\ell_\theta(f^*(V),X)\Big).
\end{align*}
By linearity,
\begin{align*}
\ell_\theta(e(\hat Y),X)-\ell_\theta(f^*(V),X)
= \big(e(\hat Y)-f^*(V)\big)^\top s_\theta(X).
\end{align*}
Therefore,
\begin{align}\label{eq:step-5}
\E[\ell_\theta(C_g(W),X)\mid V] - \ell_\theta(f^*(V),X)
=
\frac{g(Z,\hat Y)-g^*(Z,\hat Y)}{g(Z,\hat Y)}
\big(e(\hat Y)-f^*(V)\big)^\top s_\theta(X).
\end{align}
Taking unconditional expectations and using $R(\theta)=\E[\ell_\theta(f^*(V),X)]$ yields the stated bias identity.

The first claim follows immediately:
if $g=g^*$ then the multiplicative factor is $0$ almost surely;
if $\hat Y=Y$ almost surely then $Y$ is $V$-measurable and hence $f^*(V)=\E[e(Y)\mid V]=e(\hat Y)$ almost surely.

Under Assumption \ref{ass:clip}, $g(Z,\hat Y)\geq \varepsilon$.
Since $\Vert e(Y)\Vert _1=1$ and $\Vert e(Y)-e(\hat Y)\Vert _1\leq 2$,
\begin{align*}
\Vert C_g(W)\Vert _1
\leq \Vert e(\hat Y)\Vert _1 + \frac{A}{g(Z,\hat Y)}\Vert e(Y)-e(\hat Y)\Vert _1
\leq 1 + \frac{2}{\varepsilon}.
\end{align*}
By Assumption \ref{ass:bound}, $\Vert s_\theta(X)\Vert _\infty\leq M_\theta$ almost surely, so
\begin{align}\label{eq:step-6}
|\ell_\theta(C_g(W),X)|
= |C_g(W)^\top s_\theta(X)|
\leq \Vert C_g(W)\Vert _1\Vert s_\theta(X)\Vert _\infty
\leq \left(1+\frac{2}{\varepsilon}\right)M_\theta.
\end{align}

Fix a fold $j$ and write the fold average
\begin{align*}
\hat R_{n,j}(\theta) := \frac{1}{|I_j|}\sum_{i\in I_j}\ell_\theta(C_{\hat g^{(-j)}}(W_i),X_i).
\end{align*}
Condition on the training data used to fit $\hat g^{(-j)}$ (i.e.\ on $\{W_i:i\notin I_j\}$).
By cross-fitting, the summands for $i\in I_j$ are i.i.d.\ given the training data, and bounded by Eq.~\eqref{eq:step-6}.
Therefore, Chebyshev's inequality (conditional on the training data) yields
\begin{align*}
\hat R_{n,j}(\theta) - \E\big[\ell_\theta(C_{\hat g^{(-j)}}(W),X)\mid \{W_i:i\notin I_j\}\big] \to 0
\quad\text{in probability}.
\end{align*}
Averaging over $j=1,\dots,J$ and using that $J$ is fixed gives
\begin{align}\label{eq:step-7}
\hat R_n(\theta) - \bar\mu_n(\theta) \to 0 \quad\text{in probability},
\end{align}
where
\begin{align*}
\bar\mu_n(\theta) := \sum_{j=1}^J \frac{|I_j|}{n}
\E\big[\ell_\theta(C_{\hat g^{(-j)}}(W),X)\mid \{W_i:i\notin I_j\}\big].
\end{align*}

Fix $j$. Conditional on the training data, $\hat g^{(-j)}$ is deterministic and belongs to $[\varepsilon,1]$.
Apply the population identity from Eq.~\eqref{eq:step-5} with $g=\hat g^{(-j)}$ and then take absolute values:
\begin{align*}
&~\Big|
\E\big[\ell_\theta(C_{\hat g^{(-j)}}(W),X)\mid \{W_i:i\notin I_j\}\big] - R(\theta)
\Big|\\
\leq &~
\E\left[
\left|\frac{\hat g^{(-j)}(Z,\hat Y)-g^*(Z,\hat Y)}{\hat g^{(-j)}(Z,\hat Y)}\right|
\big|\big(e(\hat Y)-f^*(V)\big)^\top s_\theta(X)\big|
\ \middle|\ \{W_i:i\notin I_j\}\right]\\
\leq &~
\frac{M_\theta}{\varepsilon}
\E\left[
|\hat g^{(-j)}(Z,\hat Y)-g^*(Z,\hat Y)|
\Vert e(\hat Y)-f^*(V)\Vert _1
\ \middle|\ \{W_i:i\notin I_j\}\right],
\end{align*}
using $\Vert s_\theta(X)\Vert _\infty\leq M_\theta$ and $\hat g^{(-j)}\ge\varepsilon$.
Since $\Vert e(\hat Y)-f^*(V)\Vert _1\leq 2$, we further have
\begin{align*}
\Big|
\E\big[\ell_\theta(C_{\hat g^{(-j)}}(W),X)\mid \{W_i:i\notin I_j\}\big] - R(\theta)
\Big|
\leq
\frac{2M_\theta}{\varepsilon}
\E\left[
|\hat g^{(-j)}(Z,\hat Y)-g^*(Z,\hat Y)|
\ \middle|\ \{W_i:i\notin I_j\}\right].
\end{align*}
Thus, under condition (G) we get the conditional mean converges to $R(\theta)$ in probability.

Under condition (Y), we instead use the \emph{exact} identity
\begin{align*}
\E\big[\Vert e(\hat Y)-f^*(V)\Vert _1\big] = 2\Pp(\hat Y\ne Y),
\end{align*}
due to the following: letting $j:=\hat Y$ and writing $f^*_k(V)=\Pp(Y=k\mid V)$,
\begin{align*}
\Vert e(\hat Y)-f^*(V)\Vert _1
&= |1-f^*_j(V)|+\sum_{k\ne j}|0-f^*_k(V)|
= (1-f^*_j(V))+\sum_{k\ne j} f^*_k(V)
= 2(1-f^*_j(V))\\
&= 2\big(1-\Pp(Y=\hat Y\mid V)\big).
\end{align*}
Taking expectations yields $2(1-\E[\Pp(Y=\hat Y\mid V)])=2(1-\Pp(Y=\hat Y))=2\Pp(\hat Y\ne Y)$.
Hence (Y) implies $\E[\Vert e(\hat Y)-f^*(V)\Vert _1]\to 0$.
Returning to the bound, the same consistency statement follows due to $|\hat g^{(-j)}-g^*|\leq 1$ and $\hat g^{(-j)}\ge\varepsilon$. 
Combining across folds leads to $\bar\mu_n(\theta)-R(\theta)\to 0$ in probability under either (G) or (Y).

With the above step and Eq.~\eqref{eq:step-7}, we have
\begin{align*}
\hat R_n(\theta)-R(\theta)
=
\big(\hat R_n(\theta)-\bar\mu_n(\theta)\big)
+
\big(\bar\mu_n(\theta)-R(\theta)\big) \to 0 \quad\text{in probability},
\end{align*}
thus establishing the consistency claim.
\end{proof}

\section{Zero-shot alignment to acoustic and articulatory references}
\label{app:mocha_zero_shot}

\begin{figure}[t]
  \centering

  \begin{subfigure}[t]{0.8\linewidth}
    \centering
    \includegraphics[width=\linewidth]{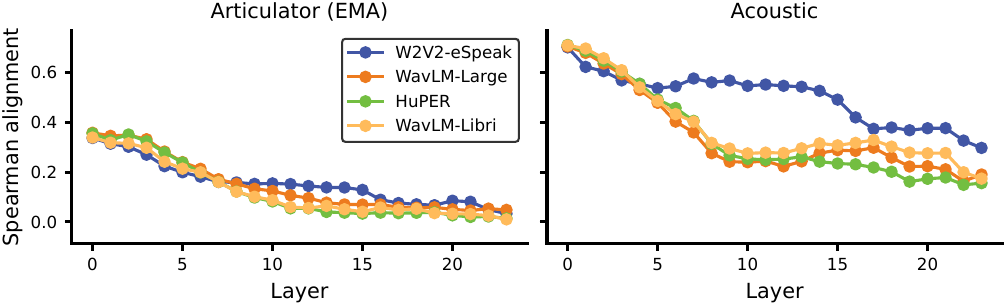}
    \caption{\textbf{Within-phone (same phone).}
    Segment pairs are sampled from the same phone label.
    We report layer-wise Spearman correlation between embedding cosine distances and reference cosine distances
    (EMA vs.\ acoustic), after subtracting a shuffled-reference baseline.}
    \label{fig:mocha_within}
  \end{subfigure}

  \vspace{2mm}

  \begin{subfigure}[t]{0.8\linewidth}
    \centering
    \includegraphics[width=\linewidth]{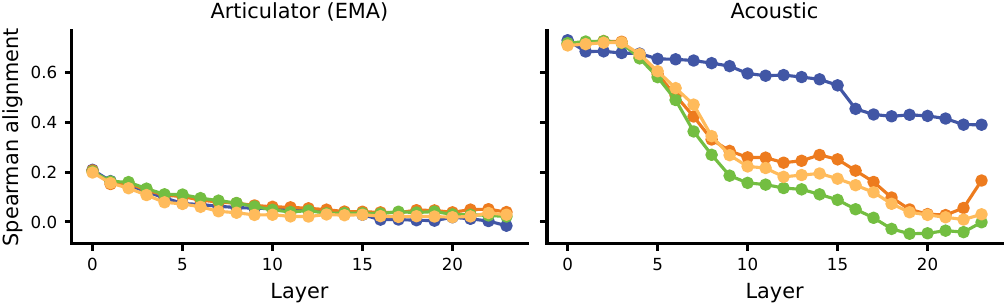}
    \caption{\textbf{Overall (mostly different phones).}
    Segment pairs are sampled uniformly from all segments (thus mainly different phone labels).
    Same metric as above.}
    \label{fig:mocha_overall}
  \end{subfigure}

  \caption{\textbf{Zero-shot alignment to acoustic and articulatory references on MOCHA-TIMIT.}
  Articulatory (EMA) alignment is weak and not diagnostic, and raw acoustic alignment provides limited separation among models.}
  \label{fig:mocha_zero_shot}
\end{figure}

To complement Sec.~\ref{subsec:analysis_phgeom}, we run a segment-level, zero-shot reference-alignment analysis on MOCHA-TIMIT\footnote{\url{https://www.cstr.ed.ac.uk/research/projects/artic/mocha.html}}, which provides paired audio and EMA measurements.
This analysis asks whether model embeddings preserve similarity structure induced by (i) raw acoustics and (ii) articulatory trajectories.

\paragraph{Reference spaces.}
For each phone segment, we compute an \textbf{acoustic} reference vector by averaging an 80-dim log-mel spectrogram over frames.
We compute an \textbf{articulatory (EMA)} reference vector by averaging the 20-dim EMA coordinates within the segment (after utterance-level de-meaning and keeping frames marked as present).
We $\ell_2$-normalize reference vectors and use cosine distance.

\paragraph{Embedding distances and RSA.}
For each model and transformer layer, we mean-pool hidden states over frames inside each phone segment to obtain a segment embedding, then $\ell_2$-normalize it and compute cosine distances on sampled segment pairs.
We report Spearman correlation between embedding distances and reference distances, and subtract a shuffled-reference correlation as a small random baseline (so values near zero indicate no reliable alignment).

\paragraph{Within-phone vs.\ overall pairs.}
We evaluate two pairing regimes:
\textbf{within-phone} samples pairs of segments that share the \emph{same} phone label (probing variability across different realizations of the same phone),
while \textbf{overall} samples pairs uniformly from all segments (mostly \emph{different} phones, probing global geometry).

\paragraph{Key observations.}
Figure~\ref{fig:mocha_within}--\ref{fig:mocha_overall} supports two takeaways.

First, \textbf{articulatory (EMA) alignment is weak}.
Across both within-phone and overall regimes, EMA alignment is modest in the earliest layers and quickly collapses toward zero in mid/late layers, with little separation between models.
This indicates that, under a simple segment-mean EMA summary, embeddings do not preserve a stable articulatory similarity geometry across depth.

Second, \textbf{raw acoustic alignment is also not diagnostic for comparing models}.
All models show similarly high early-layer alignment to log-mel similarity, and the curves provide limited separation among HuPER, WavLM-Raw, and WavLM-Libri (even when trends diverge later).
As a result, a pure acoustic reference does not clearly reveal where HuPER's improvements come from.
This motivates our main-paper focus on a more controlled \emph{acoustic--phonetic} reference (PanPhon distinctive-feature geometry), which is less entangled with nuisance factors and yields clearer, more interpretable cross-model differences.

\section{Phonetic Feature Error Rate (PFER)}
\label{app:pfer}

Following prior work, we use PFER~\cite{mortensen2016panphon}, an articulatory-feature edit distance based on PanPhon distinctive features.
Let $\mathrm{feat}(p)\in\{0,1\}^{24}$ denote the 24-dimensional binary distinctive-feature vector for phone $p$.
For a substitution $p\!\rightarrow\! q$, the cost is the normalized Hamming distance in feature space:
\begin{equation}
d_{\mathrm{sub}}(p,q)=\frac{1}{24}\left\lVert \mathrm{feat}(p)-\mathrm{feat}(q)\right\rVert_1.
\end{equation}
Insertions and deletions each have unit cost.
Let $D(\hat{Y},Y)$ be the minimum-cost Levenshtein distance between the predicted phone sequence $\hat{Y}$ and the reference sequence $Y$ under these costs.
We normalize by the reference length:
\begin{equation}
\mathrm{PFER}(\hat{Y},Y)=\frac{1}{|Y|}\,D(\hat{Y},Y).
\end{equation}
Compared to exact-match PER, PFER assigns partial credit to substitutions that differ in only a small number of distinctive features (e.g., voicing).

\newcommand{\err}[1]{\textcolor{red}{#1}}

\section{Failure cases study}
\label{sec:failure_cases}

We analyze failure modes on PPA to understand when (i) bottom-up phone evidence collapses, (ii) WFST-based top-down refinement hurts, and (iii) distortion-based switching makes suboptimal routing decisions.
For each utterance we consider three routes available at test time: \emph{1-best} (bottom-up), \emph{refine} (WFST refinement conditioned on the predicted transcript), and \emph{switch} (distortion-controlled selection between 1-best and refine).
For diagnosis only, we additionally report \emph{refine$_{given}$}, i.e., refinement conditioned on an external reference transcript (not available at test time).

\subsection{Case selection protocol}
\label{sec:failure_select}
We select representative cases using three reproducible criteria: (1) highest PFER under 1-best, (2) largest degradation from refinement ($\Delta=\mathrm{PFER}_{ref}-\mathrm{PFER}_{1b}$), and (3) largest \emph{switching regret} defined as $\mathrm{PFER}_{sw}-\min(\mathrm{PFER}_{1b},\mathrm{PFER}_{ref})$.
To compute the switching route in this appendix, we use a fixed threshold $\tau=0.573$ on the distortion score (chosen to minimize mean PFER on this evaluation set).
We omit utterance identifiers and audio references for privacy.

\subsection{Failure taxonomy}
\label{sec:failure_taxonomy}
We categorize failures into four practical types:
\textbf{(A) Extreme evidence failures}, where 1-best exhibits large insertion bursts or very short phone references make the metric unstable;
\textbf{(B) Wrong hypothesis conditioning}, where refine hurts but refine$_{given}$ would help, indicating that the guiding hypothesis (predicted transcript) is incorrect;
\textbf{(C) Over-constrained top-down}, where both refine and refine$_{given}$ hurt, suggesting pronunciation coverage or LM bias issues;
and \textbf{(D) Distortion/scheduler outliers}, where distortion-based routing makes noticeable mistakes (false positives/negatives) or distortion is weakly aligned with PFER for that sample.

\subsection{Representative examples and aggregate summary}
\label{sec:failure_examples}
Table~\ref{tab:ppa_failure_cases} lists representative examples for each failure type, and Table~\ref{tab:ppa_failure_summary} summarizes aggregate statistics by category.

\subsection{Implications}
\label{sec:failure_implications}
Across (B) and (C), a recurring pattern is that hard constraints can hurt unless the guiding hypothesis is reliable and the pronunciation model has sufficient coverage for weak/atypical realizations.
In practice, we found the following mitigations useful: (i) hypothesis ensembling (e.g., ASR $N$-best) when conditioning refinement; (ii) softening constraints by tuning LM weight/word insertion penalty and expanding pronunciation variants; and (iii) sanity checks to flag extremely short references or insertion bursts before interpreting PFER.

\begin{table}[t]
\centering
\small
\setlength{\tabcolsep}{5pt}
\begin{tabular}{l r c c c c c}
\toprule
Category & $n$ & Dist. & PFER$_{1b}$ & PFER$_{ref}$ & $\Delta$ & Regret \\
\midrule
Extreme evidence failures & 5 & 0.614 & 2.214 & 1.023 & -1.191 & 0.000 \\
Wrong hypothesis conditioning & 11 & 0.570 & 0.277 & 0.551 & 0.274 & 0.093 \\
Over-constrained top-down & 16 & 0.563 & 0.223 & 0.422 & 0.199 & 0.089 \\
Distortion/scheduler outliers & 5 & 0.512 & 0.751 & 0.585 & -0.167 & 0.174 \\
Other & 37 & 0.543 & 0.290 & 0.305 & 0.016 & 0.018 \\
\bottomrule
\end{tabular}
\caption{Aggregate statistics by failure category. $\Delta=\mathrm{PFER}_{ref}-\mathrm{PFER}_{1b}$. Regret is computed for distortion switching with $\tau=0.573$.}
\label{tab:ppa_failure_summary}
\end{table}

\begin{table*}[t]
\centering
\small
\setlength{\tabcolsep}{3pt}
\begin{tabular}{l p{5.4cm} c c c c c p{4.8cm}}
\toprule
Case & Ref text (abbr.) & Dist. & PFER$_{1b}$ & PFER$_{ref}$ & PFER$_{given}$ & PFER$_{sw}$ & Diagnosis \\
\midrule
\multicolumn{8}{l}{\textbf{A. Extreme evidence failures}}\\
A1 & Twice each day he plays skillfully and with zest \ldots & 0.597 & 3.638 & 2.350 & 2.258 & 2.350 & 1-best insertion burst; refine stabilizes. \\
A2 & yet he still thinks as swiftly as ever & 0.643 & 3.225 & 0.250 & 1.800 & 0.250 & Short phone ref; insertion dominates. \\
A3 & Grandfather likes to be modern in his language & 0.656 & 1.783 & 1.022 & 0.395 & 1.022 & 1-best insertion burst; refine stabilizes. \\
A4 & giving those who observe him a pronounced feeling of \ldots & 0.593 & 1.359 & 0.915 & 0.859 & 0.915 & Both routes fail; weak evidence. \\
\midrule
\multicolumn{8}{l}{\textbf{B. Wrong hypothesis conditioning}}\\
B1 & Well he is nearly ninety three years old & 0.649 & 0.357 & 0.663 & 0.312 & 0.663 & Hypothesis mismatch: ref-conditioned refine hurts; Given ref helps. \\
B2 & When he speaks & 0.496 & 0.163 & 0.483 & 0.208 & 0.163 & Hypothesis mismatch: ref-conditioned refine hurts; Given ref helps. \\
B3 & He dresses himself in an old black frock coat & 0.568 & 0.243 & 0.635 & 0.364 & 0.243 & Hypothesis mismatch: ref-conditioned refine hurts; Given ref helps. \\
B4 & He dresses himself in an old black frock coat & 0.605 & 0.430 & 0.765 & 0.500 & 0.765 & Hypothesis mismatch: ref-conditioned refine hurts; Given ref helps. \\
\midrule
\multicolumn{8}{l}{\textbf{C. Over-constrained top-down}}\\
C1 & but he always answers Banana oil & 0.585 & 0.150 & 0.556 & 0.679 & 0.556 & Over-constraint: refine hurts even with Given ref. \\
C2 & but he always answers Banana oil & 0.617 & 0.361 & 0.703 & 0.696 & 0.703 & Over-constraint: refine hurts even with Given ref. \\
C3 & A long beard clings to his chin & 0.560 & 0.242 & 0.554 & 0.554 & 0.242 & Over-constraint: refine hurts even with Given ref. \\
C4 & Yet he still thinks as swiftly as ever & 0.643 & 0.286 & 0.591 & 0.535 & 0.591 & Over-constraint: refine hurts even with Given ref. \\
\midrule
\multicolumn{8}{l}{\textbf{D. Distortion/scheduler outliers}}\\
D1 & usually several buttons are missing & 0.589 & 0.240 & 0.619 & 0.619 & 0.619 & Scheduler FP: high distortion but refine hurts. \\
D2 & but he always answers Banana oil & 0.566 & 0.657 & 0.361 & 0.404 & 0.657 & Scheduler FN: low distortion but refine would help. \\
D3 & giving those who observe him a pronounced feeling of \ldots & 0.553 & 0.951 & 0.656 & 0.719 & 0.951 & Scheduler FN: low distortion but refine would help. \\
D4 & he slowly takes a short walk in the open \ldots & 0.519 & 0.663 & 0.422 & 0.422 & 0.663 & Scheduler FN: low distortion but refine would help. \\
\bottomrule
\end{tabular}
\caption{Representative failure cases on PPA. Dist. is the distortion score. PFER$_{1b}$ is HuPER 1-best, PFER$_{ref}$ is refinement conditioned on the predicted transcript, PFER$_{given}$ is refinement conditioned on an external reference transcript, and PFER$_{sw}$ is distortion-controlled switching with $\tau=0.573$.}
\label{tab:ppa_failure_cases}
\end{table*}

\subsection{Concrete error snippets}
\label{sec:failure_snippets}
Red marks phones/words that are not aligned to the reference (substitutions/insertions). Deleted reference phones are shown with \sout{} on the GT line. A red underscore indicates an empty counterpart in the alignment.

\paragraph{Case A1.}
\noindent\textbf{Reference text:} Twice each day he plays skillfully and with zest upon a small organ\\
\textbf{Predicted hypothesis (for refinement):} twice \textcolor{red}{a} day he \textcolor{red}{place} \textcolor{red}{skilly} and with \textcolor{red}{zeppa} \textcolor{red}{ponce} \textcolor{red}{mogin}\\
\textbf{Distortion:} 0.597.\quad \textbf{PFER}$_{1b}$=3.638, \textbf{PFER}$_{ref}$=2.350, \textbf{PFER}$_{given}$=2.258.
\\\textit{Extreme evidence failure with a large insertion burst; both routes struggle.}
\vspace{2pt}
\begin{tabular}{@{}p{1.6cm}p{0.91\linewidth}@{}}
GT: & {\ttfamily \textcolor{red}{\_} \textcolor{red}{\_} EH \textcolor{red}{\_} \textcolor{red}{\_} \textcolor{red}{\_} \textcolor{red}{\_} N \textcolor{red}{\_} \textcolor{red}{\_} \textcolor{red}{\_} \textcolor{red}{\_} \textcolor{red}{\_} HH IY S \textcolor{red}{\_} P}\\
1-best: & {\ttfamily \textcolor{red}{+K} \textcolor{red}{+IH} EH \textcolor{red}{+L} \textcolor{red}{+L} \textcolor{red}{+IY} \textcolor{red}{+IH} N \textcolor{red}{+D} \textcolor{red}{+W} \textcolor{red}{+IH} \textcolor{red}{+TH} \textcolor{red}{+Z} \textcolor{red}{EH} \textcolor{red}{P} S \textcolor{red}{+AH} P}\\
refine: & {\ttfamily \textcolor{red}{\_} \textcolor{red}{\_} \textcolor{red}{\_} \textcolor{red}{\_} \textcolor{red}{\_} \textcolor{red}{\_} \textcolor{red}{\_} \textcolor{red}{\_} \textcolor{red}{\_} \textcolor{red}{\_} \textcolor{red}{\_} \textcolor{red}{\_} N D \textcolor{red}{\_} \textcolor{red}{\_} \textcolor{red}{\_} \textcolor{red}{\_}}\\
\end{tabular}
\vspace{6pt}

\paragraph{Case A2.}
\noindent\textbf{Reference text:} yet he still thinks as swiftly as ever\\
\textbf{Predicted hypothesis (for refinement):} \textcolor{red}{citizen}\\
\textbf{Distortion:} 0.643.\quad \textbf{PFER}$_{1b}$=3.225, \textbf{PFER}$_{ref}$=0.250, \textbf{PFER}$_{given}$=1.800.
\\\textit{Very short phone reference; refinement can sharply reduce insertion-driven PFER but may be unstable.}
\vspace{2pt}
\begin{tabular}{@{}p{1.6cm}p{0.91\linewidth}@{}}
GT: & {\ttfamily Y EH T HH IY S T IH L TH IH NG K S AE Z S W IH F T L IY AE Z EH V ER}\\
1-best: & {\ttfamily \textcolor{red}{+S} \textcolor{red}{+IH} \textcolor{red}{+T} \textcolor{red}{+IH} \textcolor{red}{+Z} \textcolor{red}{+AH} \textcolor{red}{+N} \textcolor{red}{+CH} \textcolor{red}{+AH} \textcolor{red}{+N} \textcolor{red}{+T} \textcolor{red}{+S} \textcolor{red}{+AH} \textcolor{red}{+N} \textcolor{red}{+CH} \textcolor{red}{+AH} \textcolor{red}{+N} \textcolor{red}{+T}}\\
refine: & {\ttfamily Y EH T HH IY S T IH L TH IH NG K S AE Z S W IH F T L IY AE Z EH V ER}\\
\end{tabular}
\vspace{6pt}

\paragraph{Case B1.}
\noindent\textbf{Reference text:} When he speaks\\
\textbf{Predicted hypothesis (for refinement):} he \textcolor{red}{spake}\\
\textbf{Distortion:} 0.496.\quad \textbf{PFER}$_{1b}$=0.163, \textbf{PFER}$_{ref}$=0.483, \textbf{PFER}$_{given}$=0.208.
\\\textit{Wrong-hypothesis conditioning: refinement is guided by an incorrect word form (``spake''), distorting constraints.}
\vspace{2pt}
\begin{tabular}{@{}p{1.6cm}p{0.91\linewidth}@{}}
GT: & {\ttfamily HH EH N HH IY S P IY K S}\\
1-best: & {\ttfamily HH IY \textcolor{red}{Z} \textcolor{red}{S} P IY K S}\\
refine: & {\ttfamily HH IY \textcolor{red}{S} P \textcolor{red}{EY} K}\\
\end{tabular}
\vspace{6pt}

\paragraph{Case B2.}
\noindent\textbf{Reference text:} Well he is nearly ninety three years old\\
\textbf{Predicted hypothesis (for refinement):} well \textcolor{red}{nigh} \textcolor{red}{thistle}\\
\textbf{Distortion:} 0.649.\quad \textbf{PFER}$_{1b}$=0.357, \textbf{PFER}$_{ref}$=0.663, \textbf{PFER}$_{given}$=0.312.
\\\textit{Wrong-hypothesis conditioning: hypothesis is far from the reference, so refinement degrades phone accuracy.}
\vspace{2pt}
\begin{tabular}{@{}p{1.6cm}p{0.91\linewidth}@{}}
GT: & {\ttfamily W EH L HH IY IH Z N IH R L IY N AY N T IY TH R IY Y IH R Z OW L D}\\
1-best: & {\ttfamily W EH L HH IY IH Z N IH R L IY \textcolor{red}{N} AY N \textcolor{red}{S} IY \textcolor{red}{TH} R IY \textcolor{red}{S} \textcolor{red}{AH} L D}\\
refine: & {\ttfamily W EH L \textcolor{red}{N} AY \textcolor{red}{TH} \textcolor{red}{AH} \textcolor{red}{S} \textcolor{red}{AH} L}\\
\end{tabular}
\vspace{6pt}

\paragraph{Case C1.}
\noindent\textbf{Reference text:} but he always answers Banana oil\\
\textbf{Predicted hypothesis (for refinement):} \textcolor{red}{business} \textcolor{red}{boil}\\
\textbf{Distortion:} 0.585.\quad \textbf{PFER}$_{1b}$=0.150, \textbf{PFER}$_{ref}$=0.556, \textbf{PFER}$_{given}$=0.679.
\\\textit{Over-constrained top-down: even with the correct transcript, refinement hurts (pronunciation/LM bias).}
\vspace{2pt}
\begin{tabular}{@{}p{1.6cm}p{0.91\linewidth}@{}}
GT: & {\ttfamily B AH T HH IY AO L W EY Z AE N S ER Z B AH N AE N AH OY L}\\
1-best: & {\ttfamily B AH T HH IY AO L W EY Z AE N S ER Z B AH N AE N AH OY L}\\
refine: & {\ttfamily \textcolor{red}{B} \textcolor{red}{IH} \textcolor{red}{Z} \textcolor{red}{N} \textcolor{red}{AH} \textcolor{red}{S} \textcolor{red}{B} \textcolor{red}{OY} L}\\
\end{tabular}
\vspace{6pt}

\paragraph{Case C2.}
\noindent\textbf{Reference text:} A long beard clings to his chin\\
\textbf{Predicted hypothesis (for refinement):} \textcolor{red}{henri} \textcolor{red}{kisses} chin\\
\textbf{Distortion:} 0.560.\quad \textbf{PFER}$_{1b}$=0.242, \textbf{PFER}$_{ref}$=0.554, \textbf{PFER}$_{given}$=0.554.
\\\textit{Over-constrained top-down on a short phrase; refinement introduces consistent substitutions.}
\vspace{2pt}
\begin{tabular}{@{}p{1.6cm}p{0.91\linewidth}@{}}
GT: & {\ttfamily AH L AO NG B IH R D K L IH NG Z T UW HH IH Z CH IH N}\\
1-best: & {\ttfamily AH \textcolor{red}{L} AO NG B IH R D K L IH NG Z T UW HH IH Z CH IH N}\\
refine: & {\ttfamily \textcolor{red}{HH} \textcolor{red}{EH} \textcolor{red}{N} \textcolor{red}{R} IY \textcolor{red}{K} \textcolor{red}{IH} \textcolor{red}{S} \textcolor{red}{AH} Z CH IH N}\\
\end{tabular}
\vspace{6pt}

\paragraph{Case D1.}
\noindent\textbf{Reference text:} but he always answers Banana oil\\
\textbf{Predicted hypothesis (for refinement):} but he \textcolor{red}{has} he \textcolor{red}{was} answers banana\\
\textbf{Distortion:} 0.566.\quad \textbf{PFER}$_{1b}$=0.657, \textbf{PFER}$_{ref}$=0.361.
\\\textit{Scheduler false negative at $\tau=0.573$: distortion is below threshold so switching keeps 1-best, but refinement would help.}
\vspace{2pt}
\begin{tabular}{@{}p{1.6cm}p{0.91\linewidth}@{}}
GT: & {\ttfamily B AH T HH IY AO L W EY Z AE N S ER Z B AH N AE N AH OY L}\\
1-best: & {\ttfamily \textcolor{red}{S} \textcolor{red}{P} B AH \textcolor{red}{DX} HH IY HH AE \textcolor{red}{Z} HH IY \textcolor{red}{W} AA Z AE N S ER Z B AH N AE N AH}\\
refine: & {\ttfamily B AH T HH IY \textcolor{red}{HH} AE Z HH IY W AA Z AE N S ER Z B AH N AE N AH OY L}\\
\end{tabular}
\vspace{6pt}

\paragraph{Case D2.}
\noindent\textbf{Reference text:} giving those who observe him a pronounced feeling of the utmost respect\\
\textbf{Predicted hypothesis (for refinement):} giving those who observe him \textcolor{red}{announced} \textcolor{red}{filling} \textcolor{red}{of} the \textcolor{red}{upmost} respect\\
\textbf{Distortion:} 0.553.\quad \textbf{PFER}$_{1b}$=0.951, \textbf{PFER}$_{ref}$=0.656.
\\\textit{Scheduler false negative at $\tau=0.573$: switching keeps 1-best but refinement reduces PFER under weak evidence.}
\vspace{2pt}
\begin{tabular}{@{}p{1.6cm}p{0.91\linewidth}@{}}
GT: & {\ttfamily G IH V IH NG DH OW Z HH UW AH B Z ER V HH IH M EY P R AH N AW N S T F IY L IH NG AH V DH AH AH T ER M OW S T R IH S P EH K T}\\
1-best: & {\ttfamily G IH V IH NG \textcolor{red}{N} DH OW Z HH UW AH B \textcolor{red}{Z} ER V HH IH M \textcolor{red}{AH} \textcolor{red}{N} \textcolor{red}{UH} N S T F IY \textcolor{red}{L} IH NG AH V \textcolor{red}{DH} AH}\\
refine: & {\ttfamily G IH V IH NG DH OW Z HH UW AH B Z ER V HH IH M EY P R AH N AW N S T F IY L IH NG AH V DH AH}\\
\end{tabular}
\vspace{6pt}



\section{Analysis details for understanding HuPER-Recognizer gains}
\label{app:analysis_details}

\subsection{Centroid RSA: setup and implementation}
\label{app:analysis_rsa}

\paragraph{Goal.}
We quantify whether an encoder’s phone representations are organized by broad acoustic--phonetic similarity, using PanPhon distinctive-feature distances as a controlled proxy~\cite{mortensen2016panphon}.

\paragraph{Data.}
We run the analysis on TIMIT, which provides time-stamped phone segments. Each segment is associated with a phone label.

\paragraph{Phone representations.}
At each transformer layer $\ell$, we extract hidden states and mean-pool within each phone segment to obtain a segment embedding.
We then average segment embeddings of the same phone to form a phone centroid $\mathbf{c}^{(\ell)}(p)$.

\paragraph{Distances and RSA score.}
We compute pairwise cosine distances between phone centroids to form an embedding-distance matrix $D^{(\ell)}_{\text{emb}}$.
We also compute a PanPhon feature-distance matrix $D_{\text{pan}}$ over the same phone set.
The layer-wise RSA score is the Spearman correlation between vectorized upper triangles of $D^{(\ell)}_{\text{emb}}$ and $D_{\text{pan}}$.

\paragraph{Label-set normalization.}
To compare models with different label spaces, we map each model’s predicted/annotated phone labels into HuPER’s compact inventory before computing both centroids and PanPhon distances.
Provide the mapping rules/table here (or reference your script/config):

\paragraph{Models compared.}
We include (i) \emph{W2V2-eSpeak} (multilingual phone recognition baseline), (ii) \emph{WavLM-Raw} (pretrained WavLM-Large), (iii) \emph{WavLM-Libri} (English-only fine-tuning on LibriSpeech with G2P labels), and (iv) HuPER-Recognizer.

\paragraph{Reproducibility notes.}
Report any filtering (e.g., minimum segment duration), phone-frequency thresholds, and how you handle silences/closures if applicable.


\subsection{Emission diagnostic: construction, scoring, and text inputs}
\label{app:emission_diag}

We use a controlled diagnostic to test whether CTC emissions prefer the \emph{canonical} (G2P) phone sequence or the \emph{realized} phone sequence supported by the acoustics.

\textbf{Diagnostic set.}
We focus on three cases where canonical restoration is tempting in casual English speech:
\textbf{glottalization}, \textbf{flaps}, and \textbf{stops in consonant clusters}.
For each synthesized utterance waveform $x$, we create a matched contrast $(x, y_{\mathrm{can}}, y_{\mathrm{real}})$:
$y_{\mathrm{can}}$ is the canonical G2P phone sequence, and $y_{\mathrm{real}}$ is a realized phone sequence verified by listening.

\textbf{CTC evidence.}
Given a CTC recognizer, we compute the marginal log-likelihood $\log P(y\mid x)$ via the forward algorithm in log-space.
To compare sequences of different lengths, we use a per-phone normalized preference score:
\begin{equation}
\Delta_{\mathrm{norm}}(x)
=
\frac{\log P(y_{\mathrm{can}} \mid x)}{|y_{\mathrm{can}}|}
-
\frac{\log P(y_{\mathrm{real}} \mid x)}{|y_{\mathrm{real}}|}.
\end{equation}
$\Delta_{\mathrm{norm}}>0$ indicates canonical-restoring emissions, while $\Delta_{\mathrm{norm}}<0$ indicates acoustic-faithful emissions favoring the realized sequence.

\textbf{Reporting.}
In the main paper, we plot per-utterance scores for HuPER vs.\ the XLSR baseline (Fig.~\ref{fig:emission_scatter}).
We also report a compact category-wise summary:
\begin{table}[h]
\centering
\setlength{\tabcolsep}{6pt}
\renewcommand{\arraystretch}{1.15}
\begin{tabular}{lcc|cc|cc}
\toprule
& \multicolumn{2}{c|}{HuPER-Recognizer (ours)}
& \multicolumn{2}{c|}{XLSR baseline}
& \multicolumn{2}{c}{Paired diff (XLSR -- Ours)} \\
Category
& median $\Delta_{\mathrm{norm}}$ & $\Pr(\Delta_{\mathrm{norm}}>0)$
& median $\Delta_{\mathrm{norm}}$ & $\Pr(\Delta_{\mathrm{norm}}>0)$
& median $\Delta_{\mathrm{norm}}$ & $\Pr(>0)$ \\
\midrule
Glottalization   & $-1.018$ & $0.00$ & $+0.889$ & $0.70$ & $+1.535$ & $1.00$ \\
Flaps            & $-0.768$ & $0.00$ & $-0.459$ & $0.30$ & $+0.630$ & $0.90$ \\
Stops in clusters& $-0.424$ & $0.20$ & $+0.413$ & $0.90$ & $+0.886$ & $0.80$ \\
\bottomrule
\end{tabular}
\caption{\textbf{Emission-level diagnostic using $\Delta_{\mathrm{norm}}$.}
$\Delta_{\mathrm{norm}}>0$ indicates that emissions prefer the canonical phone sequence over the realized sequence.
The paired-difference columns compare the two models on the same utterances; positive values mean the XLSR baseline is more canonical-restoring than HuPER-Recognizer.}
\label{tab:emission_delta}
\end{table}

\textbf{Text inputs (plain text only).}
This appendix also records the exact text inputs used to synthesize the diagnostic waveforms referenced in Sec.~\ref{subsec:analysis_emission}.
All inputs are plain text (no style tags or special TTS instructions). We generate short, TTS-friendly phrases
using a single prompt template (run once per category):
\begin{quote}\small\ttfamily
Generate 10 short English phrases (2--4 words) that are likely to be pronounced in casual speech with the following phenomenon: \{PHENOMENON\}. \\
Constraints: common words, no proper nouns, keep it short and natural for TTS. \\
Return only the 10 phrases, one per line.
\end{quote}
The resulting phrases (10 per category) are listed in Table~\ref{tab:emission_text_3col}. After audio generation,
each item is manually checked and paired with $(y_{\mathrm{can}}, y_{\mathrm{real}})$ for the emission test.

\begin{table*}[h]
\centering
\small
\setlength{\tabcolsep}{10pt}
\renewcommand{\arraystretch}{1.12}
\begin{tabularx}{\textwidth}{X X X}
\toprule
Glottalization & Flaps & Stops in consonant clusters \\
\midrule
a button & a better idea & last Sunday \\
my kitten & a little later & next day \\
that mountain & water bottle & just say \\
in Britain & get it & best friend \\
a little bit & put it away & first time \\
not now & what a day & most people \\
can't go & write it down & west side \\
get back & I need it & old man \\
sit down & go to bed & hand bag \\
at night & it is ready & asked to \\
\bottomrule
\end{tabularx}
\caption{Text inputs used to synthesize the emission diagnostic set in Sec.~\ref{subsec:analysis_emission} (plain text only; 10 per category).}
\label{tab:emission_text_3col}
\end{table*}

\newpage

\begin{table*}[h]
\centering
\caption{Evaluation datasets for phone recognition.}
\label{tab:eval_datasets_phone}
\small
\setlength{\tabcolsep}{6pt}
\begin{tabularx}{\textwidth}{l X}
\toprule
Dataset & Description \\
\midrule
Buckeye~\cite{pitt2005buckeye} & Natural English conversational speech with human phonetic annotation. \\
DRC-SE (DoReCo South-England)~\cite{paschen2020doreco} & English dialectal-variation subset from DoReCo (South England). \\
L2-ARCTIC-Perceived~\cite{zhao2018l2arctic} & L2 English speech corpus with human-verified / annotated pronunciations. \\
EpaDB~\cite{vidal2019epadb} & L2 English (Spanish-accented) speech with detailed phonetic annotations (e.g., mispronunciations). \\
Speech Ocean762~\cite{zhang2021speechocean762} & Large-scale L2 English corpus with human annotations / verification. \\
VoxAngeles~\cite{conf/coling/ChodroffPBM24} & Multilingual word recordings; evaluate on languages such as Chamorro, Degema, Lakota, Pampanga, Iloko, etc. \\
\bottomrule
\end{tabularx}
\end{table*}

\begin{longtblr}[
  caption = {\textbf{Zero-shot multilingual phone recognition on VoxAngeles.} Per-language PFER is reported (lower is better). Entries of 1.00 mean the model lacks support for the corresponding language/inventory. Red highlights languages where HuPER attains the best (lowest) PFER across all baselines.},
  label = {tab:voxangles}
]{
  cell{2}{2} = {r},
  cell{2}{3} = {r},
  cell{2}{4} = {r},
  cell{2}{5} = {r},
  cell{2}{6} = {r},
  cell{2}{7} = {r},
  cell{3}{2} = {r},
  cell{3}{3} = {r},
  cell{3}{4} = {r},
  cell{3}{5} = {r},
  cell{3}{6} = {r},
  cell{3}{7} = {r},
  cell{4}{2} = {r},
  cell{4}{3} = {r},
  cell{4}{4} = {r},
  cell{4}{5} = {r},
  cell{4}{6} = {r},
  cell{4}{7} = {r,fg=red},
  cell{5}{2} = {r},
  cell{5}{3} = {r},
  cell{5}{4} = {r},
  cell{5}{5} = {r},
  cell{5}{6} = {r},
  cell{5}{7} = {r},
  cell{6}{2} = {r},
  cell{6}{3} = {r},
  cell{6}{4} = {r},
  cell{6}{5} = {r},
  cell{6}{6} = {r},
  cell{6}{7} = {r},
  cell{7}{2} = {r},
  cell{7}{3} = {r},
  cell{7}{4} = {r},
  cell{7}{5} = {r},
  cell{7}{6} = {r},
  cell{7}{7} = {r,fg=red},
  cell{8}{2} = {r},
  cell{8}{3} = {r},
  cell{8}{4} = {r},
  cell{8}{5} = {r},
  cell{8}{6} = {r},
  cell{8}{7} = {r},
  cell{9}{2} = {r},
  cell{9}{3} = {r},
  cell{9}{4} = {r},
  cell{9}{5} = {r},
  cell{9}{6} = {r},
  cell{9}{7} = {r,fg=red},
  cell{10}{2} = {r},
  cell{10}{3} = {r},
  cell{10}{4} = {r},
  cell{10}{5} = {r},
  cell{10}{6} = {r},
  cell{10}{7} = {r},
  cell{11}{2} = {r},
  cell{11}{3} = {r},
  cell{11}{4} = {r},
  cell{11}{5} = {r},
  cell{11}{6} = {r},
  cell{11}{7} = {r},
  cell{12}{2} = {r},
  cell{12}{3} = {r},
  cell{12}{4} = {r},
  cell{12}{5} = {r},
  cell{12}{6} = {r},
  cell{12}{7} = {r},
  cell{13}{2} = {r},
  cell{13}{3} = {r},
  cell{13}{4} = {r},
  cell{13}{5} = {r},
  cell{13}{6} = {r},
  cell{13}{7} = {r,fg=red},
  cell{14}{2} = {r},
  cell{14}{3} = {r},
  cell{14}{4} = {r},
  cell{14}{5} = {r},
  cell{14}{6} = {r},
  cell{14}{7} = {r},
  cell{15}{2} = {r},
  cell{15}{3} = {r},
  cell{15}{4} = {r},
  cell{15}{5} = {r},
  cell{15}{6} = {r},
  cell{15}{7} = {r},
  cell{16}{2} = {r},
  cell{16}{3} = {r},
  cell{16}{4} = {r},
  cell{16}{5} = {r},
  cell{16}{6} = {r},
  cell{16}{7} = {r},
  cell{17}{2} = {r},
  cell{17}{3} = {r},
  cell{17}{4} = {r},
  cell{17}{5} = {r},
  cell{17}{6} = {r},
  cell{17}{7} = {r},
  cell{18}{2} = {r},
  cell{18}{3} = {r},
  cell{18}{4} = {r},
  cell{18}{5} = {r},
  cell{18}{6} = {r},
  cell{18}{7} = {r},
  cell{19}{2} = {r},
  cell{19}{3} = {r},
  cell{19}{4} = {r},
  cell{19}{5} = {r},
  cell{19}{6} = {r},
  cell{19}{7} = {r},
  cell{20}{2} = {r},
  cell{20}{3} = {r},
  cell{20}{4} = {r},
  cell{20}{5} = {r},
  cell{20}{6} = {r},
  cell{20}{7} = {r},
  cell{21}{2} = {r},
  cell{21}{3} = {r},
  cell{21}{4} = {r},
  cell{21}{5} = {r},
  cell{21}{6} = {r},
  cell{21}{7} = {r},
  cell{22}{2} = {r},
  cell{22}{3} = {r},
  cell{22}{4} = {r},
  cell{22}{5} = {r},
  cell{22}{6} = {r},
  cell{22}{7} = {r},
  cell{23}{2} = {r},
  cell{23}{3} = {r},
  cell{23}{4} = {r},
  cell{23}{5} = {r},
  cell{23}{6} = {r},
  cell{23}{7} = {r},
  cell{24}{2} = {r},
  cell{24}{3} = {r},
  cell{24}{4} = {r},
  cell{24}{5} = {r},
  cell{24}{6} = {r},
  cell{24}{7} = {r,fg=red},
  cell{25}{2} = {r},
  cell{25}{3} = {r},
  cell{25}{4} = {r},
  cell{25}{5} = {r},
  cell{25}{6} = {r},
  cell{25}{7} = {r},
  cell{26}{2} = {r},
  cell{26}{3} = {r},
  cell{26}{4} = {r},
  cell{26}{5} = {r},
  cell{26}{6} = {r},
  cell{26}{7} = {r},
  cell{27}{2} = {r},
  cell{27}{3} = {r},
  cell{27}{4} = {r},
  cell{27}{5} = {r},
  cell{27}{6} = {r},
  cell{27}{7} = {r},
  cell{28}{2} = {r},
  cell{28}{3} = {r},
  cell{28}{4} = {r},
  cell{28}{5} = {r},
  cell{28}{6} = {r},
  cell{28}{7} = {r},
  cell{29}{2} = {r},
  cell{29}{3} = {r},
  cell{29}{4} = {r},
  cell{29}{5} = {r},
  cell{29}{6} = {r},
  cell{29}{7} = {r},
  cell{30}{2} = {r},
  cell{30}{3} = {r},
  cell{30}{4} = {r},
  cell{30}{5} = {r},
  cell{30}{6} = {r},
  cell{30}{7} = {r,fg=red},
  cell{31}{2} = {r},
  cell{31}{3} = {r},
  cell{31}{4} = {r},
  cell{31}{5} = {r},
  cell{31}{6} = {r},
  cell{31}{7} = {r},
  cell{32}{2} = {r},
  cell{32}{3} = {r},
  cell{32}{4} = {r},
  cell{32}{5} = {r},
  cell{32}{6} = {r},
  cell{32}{7} = {r},
  cell{33}{2} = {r},
  cell{33}{3} = {r},
  cell{33}{4} = {r},
  cell{33}{5} = {r},
  cell{33}{6} = {r},
  cell{33}{7} = {r},
  cell{34}{2} = {r},
  cell{34}{3} = {r},
  cell{34}{4} = {r},
  cell{34}{5} = {r},
  cell{34}{6} = {r},
  cell{34}{7} = {r},
  cell{35}{2} = {r},
  cell{35}{3} = {r},
  cell{35}{4} = {r},
  cell{35}{5} = {r},
  cell{35}{6} = {r},
  cell{35}{7} = {r},
  cell{36}{2} = {r},
  cell{36}{3} = {r},
  cell{36}{4} = {r},
  cell{36}{5} = {r},
  cell{36}{6} = {r},
  cell{36}{7} = {r},
  cell{37}{2} = {r},
  cell{37}{3} = {r},
  cell{37}{4} = {r},
  cell{37}{5} = {r},
  cell{37}{6} = {r},
  cell{37}{7} = {r},
  cell{38}{2} = {r},
  cell{38}{3} = {r},
  cell{38}{4} = {r},
  cell{38}{5} = {r},
  cell{38}{6} = {r},
  cell{38}{7} = {r},
  cell{39}{2} = {r},
  cell{39}{3} = {r},
  cell{39}{4} = {r},
  cell{39}{5} = {r},
  cell{39}{6} = {r},
  cell{39}{7} = {r},
  cell{40}{2} = {r},
  cell{40}{3} = {r},
  cell{40}{4} = {r},
  cell{40}{5} = {r},
  cell{40}{6} = {r},
  cell{40}{7} = {r},
  cell{41}{2} = {r},
  cell{41}{3} = {r},
  cell{41}{4} = {r},
  cell{41}{5} = {r},
  cell{41}{6} = {r},
  cell{41}{7} = {r},
  cell{42}{2} = {r},
  cell{42}{3} = {r},
  cell{42}{4} = {r},
  cell{42}{5} = {r},
  cell{42}{6} = {r},
  cell{42}{7} = {r},
  cell{43}{2} = {r},
  cell{43}{3} = {r},
  cell{43}{4} = {r},
  cell{43}{5} = {r},
  cell{43}{6} = {r},
  cell{43}{7} = {r},
  cell{44}{2} = {r},
  cell{44}{3} = {r},
  cell{44}{4} = {r},
  cell{44}{5} = {r},
  cell{44}{6} = {r},
  cell{44}{7} = {r},
  cell{45}{2} = {r},
  cell{45}{3} = {r},
  cell{45}{4} = {r},
  cell{45}{5} = {r},
  cell{45}{6} = {r},
  cell{45}{7} = {r},
  cell{46}{2} = {r},
  cell{46}{3} = {r},
  cell{46}{4} = {r},
  cell{46}{5} = {r},
  cell{46}{6} = {r},
  cell{46}{7} = {r},
  cell{47}{2} = {r},
  cell{47}{3} = {r},
  cell{47}{4} = {r},
  cell{47}{5} = {r},
  cell{47}{6} = {r},
  cell{47}{7} = {r,fg=red},
  cell{48}{2} = {r},
  cell{48}{3} = {r},
  cell{48}{4} = {r},
  cell{48}{5} = {r},
  cell{48}{6} = {r},
  cell{48}{7} = {r},
  cell{49}{2} = {r},
  cell{49}{3} = {r},
  cell{49}{4} = {r},
  cell{49}{5} = {r},
  cell{49}{6} = {r},
  cell{49}{7} = {r},
  cell{50}{2} = {r},
  cell{50}{3} = {r},
  cell{50}{4} = {r},
  cell{50}{5} = {r},
  cell{50}{6} = {r},
  cell{50}{7} = {r},
  cell{51}{2} = {r},
  cell{51}{3} = {r},
  cell{51}{4} = {r},
  cell{51}{5} = {r},
  cell{51}{6} = {r},
  cell{51}{7} = {r},
  cell{52}{2} = {r},
  cell{52}{3} = {r},
  cell{52}{4} = {r},
  cell{52}{5} = {r},
  cell{52}{6} = {r},
  cell{52}{7} = {r},
  cell{53}{2} = {r},
  cell{53}{3} = {r},
  cell{53}{4} = {r},
  cell{53}{5} = {r},
  cell{53}{6} = {r},
  cell{53}{7} = {r},
  cell{54}{2} = {r},
  cell{54}{3} = {r},
  cell{54}{4} = {r},
  cell{54}{5} = {r},
  cell{54}{6} = {r},
  cell{54}{7} = {r},
  cell{55}{2} = {r},
  cell{55}{3} = {r},
  cell{55}{4} = {r},
  cell{55}{5} = {r},
  cell{55}{6} = {r},
  cell{55}{7} = {r},
  cell{56}{2} = {r},
  cell{56}{3} = {r},
  cell{56}{4} = {r},
  cell{56}{5} = {r},
  cell{56}{6} = {r},
  cell{56}{7} = {r,fg=red},
  cell{57}{2} = {r},
  cell{57}{3} = {r},
  cell{57}{4} = {r},
  cell{57}{5} = {r},
  cell{57}{6} = {r},
  cell{57}{7} = {r},
  cell{58}{2} = {r},
  cell{58}{3} = {r},
  cell{58}{4} = {r},
  cell{58}{5} = {r},
  cell{58}{6} = {r},
  cell{58}{7} = {r},
  cell{59}{2} = {r},
  cell{59}{3} = {r},
  cell{59}{4} = {r},
  cell{59}{5} = {r},
  cell{59}{6} = {r},
  cell{59}{7} = {r},
  cell{60}{2} = {r},
  cell{60}{3} = {r},
  cell{60}{4} = {r},
  cell{60}{5} = {r},
  cell{60}{6} = {r},
  cell{60}{7} = {r},
  cell{61}{2} = {r},
  cell{61}{3} = {r},
  cell{61}{4} = {r},
  cell{61}{5} = {r},
  cell{61}{6} = {r},
  cell{61}{7} = {r},
  cell{62}{2} = {r},
  cell{62}{3} = {r},
  cell{62}{4} = {r},
  cell{62}{5} = {r},
  cell{62}{6} = {r},
  cell{62}{7} = {r},
  cell{63}{2} = {r},
  cell{63}{3} = {r},
  cell{63}{4} = {r},
  cell{63}{5} = {r},
  cell{63}{6} = {r},
  cell{63}{7} = {r},
  cell{64}{2} = {r},
  cell{64}{3} = {r},
  cell{64}{4} = {r},
  cell{64}{5} = {r},
  cell{64}{6} = {r},
  cell{64}{7} = {r,fg=red},
  cell{65}{2} = {r},
  cell{65}{3} = {r},
  cell{65}{4} = {r},
  cell{65}{5} = {r},
  cell{65}{6} = {r},
  cell{65}{7} = {r,fg=red},
  cell{66}{2} = {r},
  cell{66}{3} = {r},
  cell{66}{4} = {r},
  cell{66}{5} = {r},
  cell{66}{6} = {r},
  cell{66}{7} = {r,fg=red},
  cell{67}{2} = {r},
  cell{67}{3} = {r},
  cell{67}{4} = {r},
  cell{67}{5} = {r},
  cell{67}{6} = {r},
  cell{67}{7} = {r},
  cell{68}{2} = {r},
  cell{68}{3} = {r},
  cell{68}{4} = {r},
  cell{68}{5} = {r},
  cell{68}{6} = {r},
  cell{68}{7} = {r},
  cell{69}{2} = {r},
  cell{69}{3} = {r},
  cell{69}{4} = {r},
  cell{69}{5} = {r},
  cell{69}{6} = {r},
  cell{69}{7} = {r},
  cell{70}{2} = {r},
  cell{70}{3} = {r},
  cell{70}{4} = {r},
  cell{70}{5} = {r},
  cell{70}{6} = {r},
  cell{70}{7} = {r},
  cell{71}{2} = {r},
  cell{71}{3} = {r},
  cell{71}{4} = {r},
  cell{71}{5} = {r},
  cell{71}{6} = {r},
  cell{71}{7} = {r},
  cell{72}{2} = {r},
  cell{72}{3} = {r},
  cell{72}{4} = {r},
  cell{72}{5} = {r},
  cell{72}{6} = {r},
  cell{72}{7} = {r},
  cell{73}{2} = {r},
  cell{73}{3} = {r},
  cell{73}{4} = {r},
  cell{73}{5} = {r},
  cell{73}{6} = {r},
  cell{73}{7} = {r,fg=red},
  cell{74}{2} = {r},
  cell{74}{3} = {r},
  cell{74}{4} = {r},
  cell{74}{5} = {r},
  cell{74}{6} = {r},
  cell{74}{7} = {r},
  cell{75}{2} = {r},
  cell{75}{3} = {r},
  cell{75}{4} = {r},
  cell{75}{5} = {r},
  cell{75}{6} = {r},
  cell{75}{7} = {r},
  cell{76}{2} = {r},
  cell{76}{3} = {r},
  cell{76}{4} = {r},
  cell{76}{5} = {r},
  cell{76}{6} = {r},
  cell{76}{7} = {r},
  cell{77}{2} = {r},
  cell{77}{3} = {r},
  cell{77}{4} = {r},
  cell{77}{5} = {r},
  cell{77}{6} = {r},
  cell{77}{7} = {r,fg=red},
  cell{78}{2} = {r},
  cell{78}{3} = {r},
  cell{78}{4} = {r},
  cell{78}{5} = {r},
  cell{78}{6} = {r},
  cell{78}{7} = {r},
  cell{79}{2} = {r},
  cell{79}{3} = {r},
  cell{79}{4} = {r},
  cell{79}{5} = {r},
  cell{79}{6} = {r},
  cell{79}{7} = {r},
  cell{80}{2} = {r},
  cell{80}{3} = {r},
  cell{80}{4} = {r},
  cell{80}{5} = {r},
  cell{80}{6} = {r},
  cell{80}{7} = {r},
  cell{81}{2} = {r},
  cell{81}{3} = {r},
  cell{81}{4} = {r},
  cell{81}{5} = {r},
  cell{81}{6} = {r},
  cell{81}{7} = {r},
  cell{82}{2} = {r},
  cell{82}{3} = {r},
  cell{82}{4} = {r},
  cell{82}{5} = {r},
  cell{82}{6} = {r},
  cell{82}{7} = {r},
  cell{83}{2} = {r},
  cell{83}{3} = {r},
  cell{83}{4} = {r},
  cell{83}{5} = {r},
  cell{83}{6} = {r},
  cell{83}{7} = {r,fg=red},
  cell{84}{2} = {r},
  cell{84}{3} = {r},
  cell{84}{4} = {r},
  cell{84}{5} = {r},
  cell{84}{6} = {r},
  cell{84}{7} = {r},
  cell{85}{2} = {r},
  cell{85}{3} = {r},
  cell{85}{4} = {r},
  cell{85}{5} = {r},
  cell{85}{6} = {r},
  cell{85}{7} = {r},
  cell{86}{2} = {r},
  cell{86}{3} = {r},
  cell{86}{4} = {r},
  cell{86}{5} = {r},
  cell{86}{6} = {r},
  cell{86}{7} = {r},
  cell{87}{2} = {r},
  cell{87}{3} = {r},
  cell{87}{4} = {r},
  cell{87}{5} = {r},
  cell{87}{6} = {r},
  cell{87}{7} = {r,fg=red},
  cell{88}{2} = {r},
  cell{88}{3} = {r},
  cell{88}{4} = {r},
  cell{88}{5} = {r},
  cell{88}{6} = {r},
  cell{88}{7} = {r},
  cell{89}{2} = {r},
  cell{89}{3} = {r},
  cell{89}{4} = {r},
  cell{89}{5} = {r},
  cell{89}{6} = {r},
  cell{89}{7} = {r},
  cell{90}{2} = {r},
  cell{90}{3} = {r},
  cell{90}{4} = {r},
  cell{90}{5} = {r},
  cell{90}{6} = {r},
  cell{90}{7} = {r},
  cell{91}{2} = {r},
  cell{91}{3} = {r},
  cell{91}{4} = {r},
  cell{91}{5} = {r},
  cell{91}{6} = {r},
  cell{91}{7} = {r},
  cell{92}{2} = {r},
  cell{92}{3} = {r},
  cell{92}{4} = {r},
  cell{92}{5} = {r},
  cell{92}{6} = {r},
  cell{92}{7} = {r},
  cell{93}{2} = {r},
  cell{93}{3} = {r},
  cell{93}{4} = {r},
  cell{93}{5} = {r},
  cell{93}{6} = {r},
  cell{93}{7} = {r,fg=red},
  cell{94}{2} = {r},
  cell{94}{3} = {r},
  cell{94}{4} = {r},
  cell{94}{5} = {r},
  cell{94}{6} = {r},
  cell{94}{7} = {r,fg=red},
  cell{95}{2} = {r},
  cell{95}{3} = {r},
  cell{95}{4} = {r},
  cell{95}{5} = {r},
  cell{95}{6} = {r},
  cell{95}{7} = {r},
  cell{96}{2} = {r},
  cell{96}{3} = {r},
  cell{96}{4} = {r},
  cell{96}{5} = {r},
  cell{96}{6} = {r},
  cell{96}{7} = {r},
  vline{2} = {-}{},
  hline{1-2,96-97} = {-}{},
}
language & Allosauru & Wav2Vec2Phoneme & MultIPA & ZIPA & Allophant & HuPER         \\
abk      & 0.61      & 0.40            & 0.32    & 0.44 & 1.00      & 0.35          \\
ace      & 0.24      & 0.18            & 0.15    & 0.15 & 1.00      & 0.29          \\
ady      & 0.36      & 0.32            & 0.35    & 0.32 & 1.00      & \textbf{0.30} \\
aeb      & 0.32      & 0.17            & 0.17    & 0.25 & 1.00      & 0.17          \\
afn      & 0.20      & 0.10            & 0.11    & 0.21 & 1.00      & 0.14          \\
afr      & 0.23      & 0.11            & 0.14    & 0.16 & 1.00      & \textbf{0.10} \\
agx      & 0.25      & 0.20            & 0.17    & 0.16 & 1.00      & 0.30          \\
ajp      & 0.38      & 0.12            & 0.13    & 0.21 & 1.00      & \textbf{0.12} \\
aka      & 0.20      & 0.10            & 0.13    & 0.18 & 1.00      & 0.14          \\
apc      & 0.25      & 0.14            & 0.13    & 0.14 & 1.00      & 0.18          \\
ape      & 0.34      & 0.17            & 0.14    & 0.15 & 1.00      & 0.19          \\
apw      & 0.30      & 0.19            & 0.18    & 0.21 & 1.00      & \textbf{0.10} \\
asm      & 0.21      & 0.10            & 0.09    & 0.18 & 1.00      & 0.20          \\
azb      & 0.25      & 0.15            & 0.12    & 0.18 & 1.00      & 0.20          \\
bam      & 0.27      & 0.22            & 0.23    & 0.32 & 1.00      & 0.28          \\
bem      & 0.14      & 0.08            & 0.05    & 0.07 & 1.00      & 0.08          \\
ben      & 0.32      & 0.27            & 0.24    & 0.28 & 0.18      & 0.40          \\
bfd      & 0.35      & 0.17            & 0.17    & 0.21 & 1.00      & 0.27          \\
bfq      & 0.21      & 0.20            & 0.15    & 0.31 & 1.00      & 0.20          \\
bhk      & 0.28      & 0.14            & 0.12    & 0.13 & 1.00      & 0.14          \\
bin      & 0.19      & 0.08            & 0.10    & 0.17 & 1.00      & 0.12          \\
brv      & 0.36      & 0.28            & 0.29    & 0.26 & 1.00      & 0.27          \\
bsq      & 0.33      & 0.20            & 0.22    & 0.30 & 1.00      & \textbf{0.05} \\
bwr      & 0.24      & 0.10            & 0.15    & 0.14 & 1.00      & 0.10          \\
cbv      & 0.30      & 0.25            & 0.25    & 0.25 & 1.00      & 0.47          \\
ces      & 0.21      & 0.10            & 0.08    & 0.13 & 0.09      & 0.08          \\
cha      & 0.16      & 0.05            & 0.06    & 0.15 & 1.00      & 0.06          \\
cji      & 0.40      & 0.30            & 0.21    & 0.47 & 1.00      & 0.32          \\
col      & 0.38      & 0.27            & 0.30    & 0.31 & 1.00      & \textbf{0.25} \\
cpn      & 0.26      & 0.17            & 0.19    & 0.35 & 1.00      & 0.17          \\
dag      & 0.30      & 0.17            & 0.14    & 0.34 & 1.00      & 0.16          \\
dan      & 0.24      & 0.16            & 0.19    & 0.23 & 0.24      & 0.18          \\
deg      & 0.13      & 0.07            & 0.07    & 0.06 & 1.00      & 0.07          \\
dyo      & 0.17      & 0.07            & 0.12    & 0.16 & 1.00      & 0.13          \\
efi      & 0.22      & 0.09            & 0.10    & 0.20 & 1.00      & 0.20          \\
ell      & 0.18      & 0.07            & 0.05    & 0.08 & 0.13      & 0.13          \\
ema      & 0.26      & 0.10            & 0.21    & 0.17 & 1.00      & 0.24          \\
eus      & 0.40      & 0.07            & 0.08    & 0.09 & 0.05      & 0.08          \\
ewe      & 0.36      & 0.14            & 0.18    & 0.33 & 1.00      & 0.22          \\
ffm      & 0.22      & 0.12            & 0.12    & 0.16 & 1.00      & 0.17          \\
fin      & 0.21      & 0.13            & 0.11    & 0.15 & 0.14      & 0.12          \\
fub      & 0.19      & 0.06            & 0.09    & 0.16 & 1.00      & 0.16          \\
gaa      & 0.25      & 0.17            & 0.20    & 0.23 & 1.00      & 0.26          \\
gla      & 0.32      & 0.14            & 0.16    & 0.30 & 1.00      & 0.18          \\
guj      & 0.18      & 0.13            & 0.14    & 0.16 & 1.00      & 0.18          \\
gwx      & 0.54      & 0.21            & 0.24    & 0.24 & 1.00      & \textbf{0.16} \\
hak      & 0.30      & 0.17            & 0.17    & 0.23 & 1.00      & 0.18          \\
hau      & 0.12      & 0.11            & 0.07    & 0.15 & 1.00      & 0.16          \\
haw      & 0.21      & 0.12            & 0.14    & 0.14 & 1.00      & 0.15          \\
heb      & 0.26      & 0.11            & 0.16    & 0.15 & 1.00      & 0.21          \\
hil      & 0.21      & 0.11            & 0.12    & 0.15 & 1.00      & 0.12          \\
hin      & 0.27      & 0.12            & 0.07    & 0.17 & 0.09      & 0.15          \\
hni      & 0.37      & 0.19            & 0.22    & 0.27 & 1.00      & 0.57          \\
hrv      & 0.21      & 0.09            & 0.11    & 0.13 & 1.00      & 0.16          \\
hun      & 0.30      & 0.15            & 0.15    & 0.26 & 0.16      & \textbf{0.10} \\
hye      & 0.26      & 0.10            & 0.11    & 0.13 & 1.00      & 0.17          \\
ibb      & 0.20      & 0.12            & 0.15    & 0.22 & 1.00      & 0.23          \\
ibo      & 0.30      & 0.15            & 0.14    & 0.22 & 1.00      & 0.30          \\
idu      & 0.23      & 0.13            & 0.12    & 0.16 & 1.00      & 0.19          \\
ilo      & 0.17      & 0.10            & 0.09    & 0.13 & 1.00      & 0.12          \\
isl      & 0.25      & 0.15            & 0.12    & 0.16 & 1.00      & 0.15          \\
its      & 0.64      & 0.11            & 0.18    & 0.29 & 1.00      & 0.20          \\
kan      & 0.18      & 0.10            & 0.07    & 0.12 & 1.00      & \textbf{0.07} \\
kea      & 0.31      & 0.24            & 0.20    & 0.25 & 1.00      & \textbf{0.13} \\
khm      & 0.27      & 0.22            & 0.21    & 0.24 & 1.00      & \textbf{0.15} \\
klu      & 0.37      & 0.17            & 0.21    & 0.28 & 1.00      & 0.22          \\
knn      & 0.20      & 0.13            & 0.11    & 0.25 & 1.00      & 0.20          \\
kri      & 0.15      & 0.08            & 0.09    & 0.10 & 1.00      & 0.38          \\
kub      & 0.17      & 0.09            & 0.10    & 0.16 & 1.00      & 0.15          \\
kye      & 0.25      & 0.20            & 0.18    & 0.28 & 1.00      & 0.29          \\
lad      & 0.25      & 0.15            & 0.13    & 0.12 & 1.00      & 0.16          \\
lar      & 0.48      & 0.14            & 0.18    & 0.30 & 1.00      & \textbf{0.11} \\
lav      & 0.35      & 0.20            & 0.19    & 0.22 & 1.00      & 0.22          \\
led      & 0.42      & 0.23            & 0.15    & 0.21 & 1.00      & 0.25          \\
lgq      & 0.25      & 0.11            & 0.16    & 0.41 & 1.00      & 0.23          \\
lit      & 0.26      & 0.14            & 0.15    & 0.18 & 0.14      & \textbf{0.12} \\
lkt      & 0.26      & 0.18            & 0.17    & 0.17 & 1.00      & 0.21          \\
lug      & 0.16      & 0.10            & 0.15    & 0.17 & 1.00      & 0.16          \\
mak      & 0.29      & 0.17            & 0.19    & 0.18 & 1.00      & 0.25          \\
mal      & 0.37      & 0.19            & 0.18    & 0.22 & 1.00      & 0.46          \\
mlt      & 0.27      & 0.13            & 0.09    & 0.17 & 0.19      & 0.16          \\
mya      & 0.28      & 0.24            & 0.25    & 0.25 & 1.00      & \textbf{0.16} \\
njm      & 0.26      & 0.17            & 0.09    & 0.38 & 1.00      & 0.21          \\
nld      & 0.20      & 0.13            & 0.13    & 0.14 & 0.20      & 0.13          \\
ozm      & 0.22      & 0.14            & 0.16    & 0.18 & 1.00      & 0.19          \\
pam      & 0.24      & 0.16            & 0.19    & 0.18 & 1.00      & \textbf{0.08} \\
pes      & 0.23      & 0.14            & 0.15    & 0.17 & 1.00      & 0.19          \\
prs      & 0.27      & 0.16            & 0.17    & 0.20 & 0.99      & 0.17          \\
run      & 0.27      & 0.09            & 0.13    & 0.11 & 1.00      & 0.21          \\
sbc      & 0.32      & 0.22            & 0.12    & 0.18 & 1.00      & 0.18          \\
tsw      & 0.30      & 0.10            & 0.11    & 0.21 & 1.00      & 0.14          \\
tzm      & 0.33      & 0.16            & 0.15    & 0.21 & 1.00      & \textbf{0.14} \\
wuu      & 0.45      & 0.36            & 0.33    & 0.31 & 1.00      & \textbf{0.22} \\
yue      & 0.29      & 0.14            & 0.15    & 0.21 & 1.00      & 0.16          \\
Avg.     & 0.28      & 0.15            & 0.15    & 0.21 & 0.90      & 0.19          
\end{longtblr}
\begin{table}[h]
\centering
\caption{Baseline model checkpoints}
\label{tab:model_checkpoints}
\begin{tblr}{
  vline{2} = {-}{},
  hline{1,9} = {-}{0.08em},
  hline{2} = {-}{},
}
Model           & Checkpoint                                                                    \\
Allosaurus      & https://github.com/xinjli/allosaurus                                          \\
Allophant       & https://github.com/kgnlp/allophant                                            \\
W2V2-eSpeak & https://huggingface.co/facebook/wav2vec2-xlsr-53-espeak-cv-ft                 \\
MultIPA         & https://huggingface.co/ctaguchi/wav2vec2-large-xlsr-japlmthufielta-ipa1000-ns \\
ZIPA            & https://github.com/lingjzhu/zipa                                              \\
POWSM           & https://huggingface.co/espnet/powsm \\
W2V2-en     &  https://huggingface.co/Bluecast/wav2vec2-Phoneme
\end{tblr}
\end{table}

\newpage

\end{document}